\def\aujour{\number\day \space \ifcase\month\or
janvier\or f�vrier\or mars\or avril\or mai\or
juin\or juillet\or ao�t\or septembre\or octobre\or
novembre\or d�cembre\fi \space \number\year}
\def\cH{{\cal H}}
\def\cL{{\cal L}}
\newtheorem{remark}{Remark}
\newtheorem{theorem}{Theorem}
\newtheorem{proof}{Proof}
\def\C{{\setbox0=\hbox{$\displaystyle{\rm C}$}
        \hbox{\hbox to0pt{\kern 0.4\wd0\vrule height 0.95\ht0\hss}\box0}}}
\def\Q{{\setbox0=\hbox{$\displaystyle{\rm Q}$}%
    \hbox{\raise 0.2\ht0\hbox to0pt{\kern 0.4\wd0\vrule height
    0.85\ht0\hss}\box0}}} 
\def\R{\mathop{\rm I\mkern -3.5mu R}} 
\def\cHi{{\cal H}_{\infty}} 
\def\cH2{{\cal H}_2} 
\def\cL2{\mathop{\mathcal L}_{2}} 
\def\cRH2{\mathop{\cal R \cal H}_2} 
\def\cRL2{\mathop{\cal R \cal L}_{2}} 
\DeclareMathOperator*{\diag}{diag}
\DeclareMathOperator*{\der}{d}
\newcommand{\abs}[1]{\left|{#1}\right|}
\DeclareRobustCommand\sfrac[1]{\@ifnextchar/{\@sfrac{#1}}
                                            {\@sfrac{#1}/}}
\def\@sfrac#1/#2{\leavevmode\kern.1em\raise.5ex
         \hbox{$\m@th\fontsize\sf@size\z@
                           \selectfont#1$}\kern-.1em
         /\kern-.15em\lower.25ex
          \hbox{$\m@th\fontsize\sf@size\z@
                            \selectfont#2$}}
\newcommand{\maxd}{d_{\mathcal{C}}}
\newcommand{\mind}{d_{\mathcal{B}}}
\newcommand{\ib}{I_\mathrm{b}}
\newcommand{\bitrate}{B}
\newcommand{\ber}{\mathrm{BER}}
\def\BibTeX{{\rm B\kern-.05em{\sc i\kern-.025em b}\kern-.08em
    T\kern-.1667em\lower.7ex\hbox{E}\kern-.125emX}}
\title{Establishing and Maintaining a Reliable Optical Wireless Communication in Underwater Environment}
\author{ \href{https://orcid.org/0000-0002-2576-5515}{\includegraphics[scale=0.06]{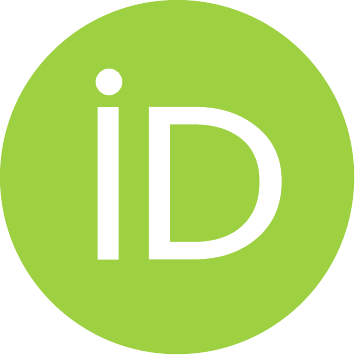}\hspace{1mm}Ibrahima~N'Doye$^1$}, 
\href{https://orcid.org/0000-0003-1819-9436}{\includegraphics[scale=0.06]{orcid.pdf}\hspace{1mm}Ding Zhang$^2$}, 
\href{https://orcid.org/0000-0003-4827-1793}{\includegraphics[scale=0.06]{orcid.pdf}\hspace{1mm}Mohamed-Slim Alouini$^1$}, 
\href{https://orcid.org/0000-0001-5944-0121}{\includegraphics[scale=0.06]{orcid.pdf}\hspace{1mm}Taous-Meriem~Laleg-Kirati$^1$}
\thanks{This work has been supported by the King Abdullah University of Science and Technology (KAUST), Base Research Fund (BAS/1/1627-01-01) to Taous Meriem Laleg.} \\
$^1$Computer, Electrical and Mathematical Sciences and Engineering Division (CEMSE)\\
King Abdullah University of Science and Technology (KAUST)\\
Thuwal 23955-6900, Saudi Arabia \\
	\texttt{ibrahima.ndoye@kaust.edu.sa; slim.alouini@kaust.edu.sa; taousmeriem.laleg@kaust.edu.sa} \\
$^2$Department of Electronic and Computer Engineering\\ 
The Hong Kong University of Science and Technology\\
 Clear Water Bay,  Kowloon, Hong Kong, China\\
\texttt{ding.zhang@connect.ust.hk}\\
}
\begin{document}
\maketitle

\begin{abstract}
This paper proposes the trajectory tracking problem between an autonomous underwater vehicle (AUV) and a mobile surface ship, both equipped with optical communication transceivers. The challenging issue is to maintain stable connectivity between the two autonomous vehicles within an optical communication range. We define a directed optical line-of-sight (LoS) link between the two-vehicle systems. The transmitter is mounted on the AUV while the surface ship is equipped with an optical receiver. However, this optical communication channel needs to preserve a stable transmitter-receiver position to reinforce service quality, which typically includes a bit rate and bit error rates. A cone-shaped beam region of the optical receiver is approximated based on the channel model; then, a minimum bit rate is ensured if the AUV transmitter remains inside of this region. Additionally, we design two control algorithms for the transmitter to drive the AUV and maintain it in the cone-shaped beam region under an uncertain oceanic environment. Lyapunov function-based analysis that ensures asymptotic stability of the resulting closed-loop tracking error is used to design the proposed NLPD controller. Numerical simulations are performed using MATLAB/Simulink to show the controllers' ability to achieve favorable tracking in the presence of the solar background noise within competitive times. Finally, results demonstrate the proposed NLPD controller improves the tracking error performance more than $70\%$ under nominal conditions and $35\%$ with model uncertainties and disturbances compared to the original PD strategy.
\end{abstract}

\keywords{Positioning and tracking control \and Optical wireless communication \and Autonomous underwater vehicle \and Reference position \and Nonlinear proportional derivative controller \and Proportional derivative controller}

\section{Introduction}
Optical communication combines techniques from various long-studied disciplines: optical communication, free-space optical, and underwater optical communication, along with laser development and mathematical modeling \citep{Cox:12}; and its application draws more and more attention from the industry. The use of underwater wireless networks and the need for data for multimedia and other services with the aid of remotely-operated vehicles or autonomous underwater vehicles (AUVs) is a considered example \citep{HKBLO:16}.  Underwater wireless optical communication (UWOC) is a promising technology for applications that allows a reliable communication link characterized by high channel capacity, low latency, energetic efficiency, and good communication range, which is up to $150$ {\si m} typically in clearwater \citep{HKBLO:16,FHM:15,HaR:08}. In \citep{WLGMR:17}, the authors developed a robust acquisition and tracking prototype in 3D underwater platforms for short-range communication. The technology used multiple photodetectors and a scanning strategy to exploit the variable beam divergence and provide robust acquisition for the underwater robots. However, mobile underwater platforms, including the effects of their dynamic model uncertainties and the uncertain underwater environment, are often limited by various operational and sensing capabilities constraints. These limitations induce a challenge for the trajectory tracking performance and increase the difficulty of maintaining a reliable optical communication.

Recent advances in low-cost light sources and more interest in ocean surveys have stimulated current optical communication research in an underwater area that has not been thoroughly investigated as terrestrial or free-space \citep{Cox:12}. Yet much work still needs to be achieved in system design and modeling and continued efforts to improve pointing error, localization, and tracking systems.
 
The underwater environment is challenging for all communication modes, with distinct tradeoffs between link range and data rate. The absorption and scattering effects of seawater, underwater turbulence, misalignment, and other factors can severely degrade the performance of underwater optical communication systems \citep{ZFZDC:17,SCAA:17}. These factors will result in frequent communication failures. Moreover, when transmitter and receiver mobility is involved, maintaining reliable optical-based communication by accurate pointing and tracking is getting even more challenging.

 \begin{figure}[!t]
\centering
      \begin{overpic}[scale=0.72]{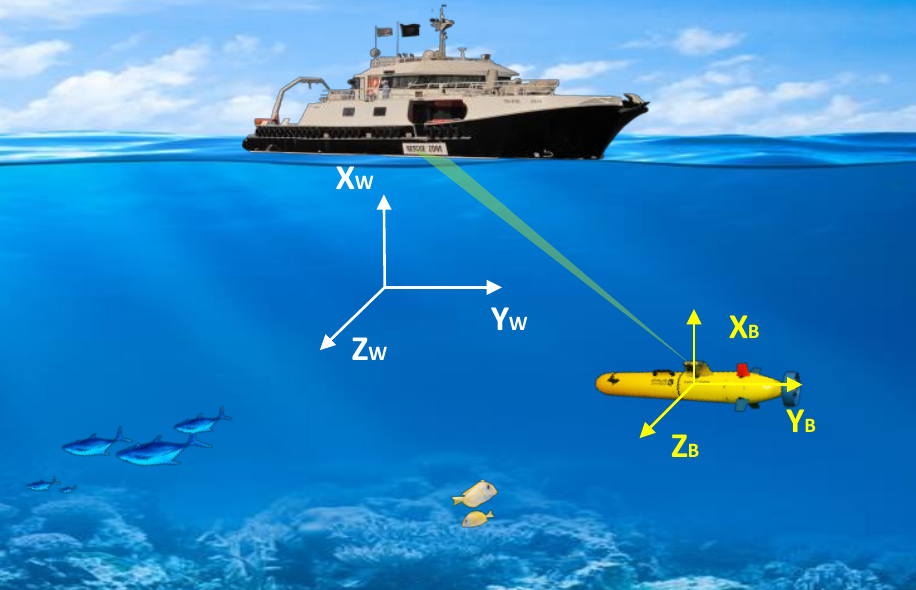}
       \put(78,12){\footnotesize  \textcolor{yellow}{$\{\cal B\}$}}
       \put(30,33){\footnotesize  \textcolor{white}{$\{\cal W\}$}}
            \end{overpic}  
              \caption{A team of surface-ship and autonomous underwater vehicle deployed to explore the ocean area.}\label{fig1}    
              \end{figure}

Underwater robotic vehicles have played a significant role in subsea marine operations, engineering, and science. The introduction of remotely operated vehicles (ROVs) has revolutionized the subsea industry. However, ROVs are tethered by an umbilical that makes their use complex, such as in restricted and disaster subsea areas. Furthermore, the autonomous underwater vehicles (AUVs) removed the need for a tether and their associated support, which endowed their manipulation capabilities. Modern AUVs can coordinate broad classes of underwater tasks and have many relevant applications in a versatile range from the deep sea to coastal water. AUVs integrate smart control algorithms and software schemes to select a proper communication strategy to apply in the communication system requirements and the actual environment status \citep{FSQSQ:15,SSB:17}. This development will enhance infrastructures for commercial use and scientific research by formulating versatile solutions to efficient communication between underwater vehicles, surface vessels, and seafloor infrastructures \citep{FBWPT:10,SSB:17,WKLS:19}.


Valid options for terrestrial communication on robotic platforms have been broadly investigated for underwater vehicles. However, most of the radio-frequency (RF) communications fail in the maritime area due to the significant absorption of electromagnetic radiation through saline water \citep{CrF:15},\citep{CHWF:18}.  Optical communication systems purpose a broad license-free spectrum, {\it, i.e.} the visible light and infrared spectrum, which contribute to a high modulation bandwidth and reliability \citep{CHWF:18}. From an economic perspective, optical components in optical communication systems are cheaper, lighter, and smaller than the high-speed RF systems. On the other hand, acoustic systems, which are dominant techniques for guidance and navigation in the underwater environment, are slow with high delay latency due to the slow propagation speed of sound under water.  Hence, optical wireless communications are of particular interest for data transmission in the underwater environment \citep{RuA:12,TZT:13,OLPKAO:15}. For example, \citep{DXR:13} performs an experimental test and demonstrates underwater wireless video communication in AUV based on a visible light communication system. Recently, a $7.2$ {\si Gbps} underwater optical wireless communication has been presented in \citep{WCWTL:17} for $450${\si nm} blue laser with a transmission distance of $6$ {\si m}. However, most existing underwater communication methods in the literature do not address that the AUV might require collecting and sending huge data such as images, sensor readings, or videos. Furthermore, these approaches deal with a particular concept of AUV trajectory tracking performance and focus on simple mapping of the underwater environment \citep{CZC:16,EZY:16,LiY:17,SSB:17c,SSB:17b,SSB:18,GTCC:20,ShS:20}. On the other hand, based on the existing trajectory tracking control approaches for underwater vehicles, there is a lack of an autonomous underwater control scheme that provides a complete and efficient control strategy in which the communication performance metrics are considered.

Differently from\citep{ZNBAAL:20}, the present work is not concerned with the localization problem. Indeed, a hybrid acoustic-optical communication is adopted in\citep{ZNBAAL:20}. Here we study the optical communication channel in which the position and linear velocity are constantly updated through the vehicle's onboard sensors. Additionally, the effect of the background noise is not considered in\citep{ZNBAAL:20}.\\  
This paper contains the following significant new contributions.
\begin{enumerate}
  \item The proposed NLPD controller aims at reinforcing the classical PD controller to achieve satisfactory robustness performance against mass parameter errors, measurement noises, and external disturbances forces.
  \item Mass parameter errors are added to the AUV system to test the proposed NLPD and PD controllers' robustness. 
  \item An analytical expression of the bit error rate that helps to compute the maximum achievable link distance is derived.
  \item Detailed proofs of the asymptotic stability for the trajectory tracking error problem using the proposed NLPD control are provided.
\end{enumerate}

The contributions of this paper are summarized as follows:
\begin{itemize}
  \item We study the trajectory tracking of a mobile ship receiver by an AUV transmitter to establish a directed optical line-of-sight (LoS) link, as illustrated in Fig. \ref{fig1}.
  \item We consider that both the surface ship and the AUV are equipped with only an optical communication system and describe the model of a direct optical LoS communication link between these systems in the presence of the solar background noise.
  \item We derive an analytical expression of the bit rate, compute the maximum achievable link distance, and propose a solution to keep an accurate position for UWOC systems while accomplishing a desired bit error rate under a solar background noise.
  \item  Finally, we propose a nonlinear proportional-derivative (NLPD) controller and a proportional-derivative (PD) controller to ensure this relative position is in the cone-shaped region. The simulation results show satisfactory robustness aspects of the NLPD controller.
\end{itemize}

This paper is organized as follows: Section \ref{sec-Hybrid-AO} details the model of the UWOC channel under solar radiation underwater. Also, a cone-shaped beam profile of the mobile surface ship receiver is defined based on the optical channel model. Hence, if the AUV transmitter lies in this cone-shaped region, the directed LoS optical link is maintained with the desired minimum rate. Section \ref{method} provides the methodology to show that the AUV tracks the mobile ship receiver and lies in this cone-shaped region. Section \ref{simulations} outlines the results of the numerical simulations obtained using the MATLAB/Simulink. Finally, section \ref{conclusion} concludes the paper.


\section{Optical link modeling}\label{sec-Hybrid-AO}
Let us consider a channel model for the optical communication link to predict the effect of solar background noise on the optical link budget and then derive the maximum link distance to achieve a specified bit error rate (BER) and bit rate.

\subsection{Underwater Optical Communication Link}\label{sec-optical-link}
We use the intensity modulation with direct detection (IM/DD) scheme that can produce high-speed links for a range of systems via on-off Keying (OOK) in the physical layer \citep{GPR:12}. On the other hand, we suppose a directed LoS optical link model of IM/DD system \citep{GPR:12}.  Fig.~\ref{fig2} illustrates the
main parameters of this optical channel model. 

The radiant intensity is given by~\citep{GPR:12},\citep{SDDS:14}
\begin{equation}\label{eq-OW-1}
I_{s}(d,\phi)=P_{{\scriptsize\mbox{TX}}}\frac{m+1}{2\pi d^2}\cos^m \phi,
\end{equation}
where $d$ is the distance between transceivers installed on the ship and AUV, $\phi$ represents the pointing angle with respect to the optical link,  $P_{{\scriptsize\mbox{TX}}}$ is the average power emitted by transmitter, and $m$ is the Lambert's mode number representing the effect of the source beam. This number is given as \citep{GPR:12},\citep{SDDS:14}
\begin{equation}\label{eq-OW-2} 
m=\frac{-\ln 2}{\ln (\cos \Phi_{1/2})},
\end{equation}
where $\Phi_{1/2}$ describes the half-angle at half-power of average transmitted optical source which models the transmitter beam width.

\begin{figure}[!t]
\centering
      \begin{overpic}[scale=0.72]{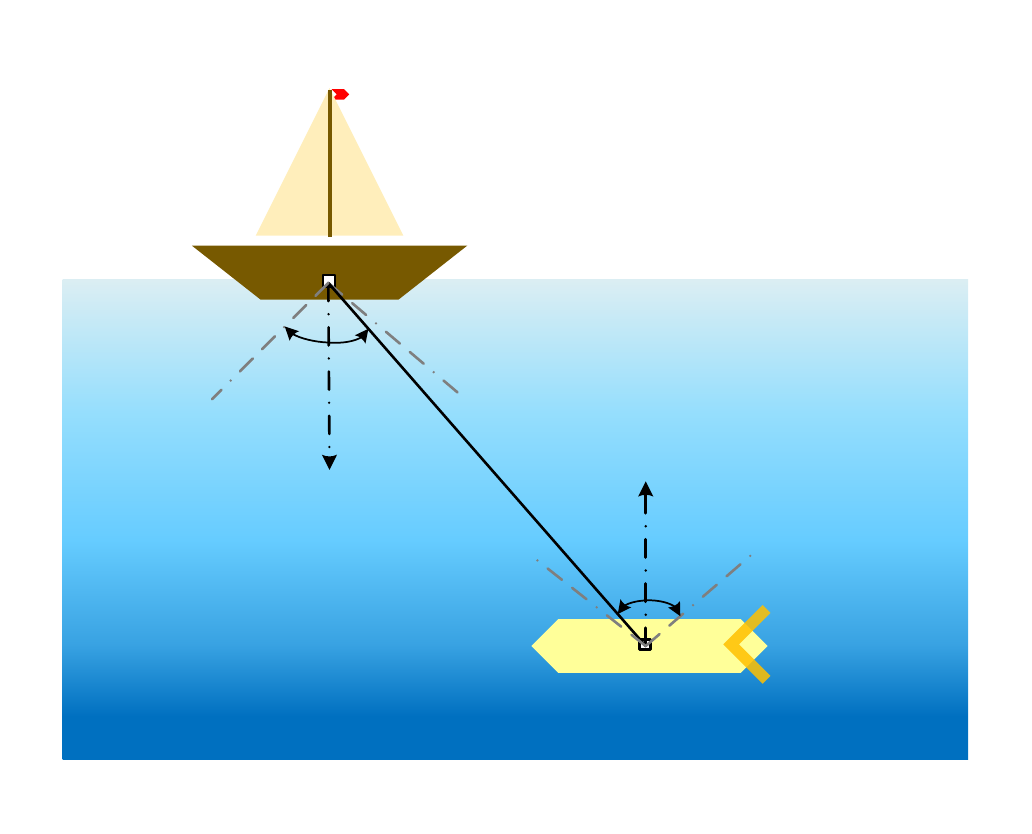}
       \put(60,21){\footnotesize  $\phi$}
         \put(65,22){\footnotesize  $\Phi_{\frac{1}{2}}$ }
          \put(46,36){\footnotesize  $d$}
 \put(32,43){\footnotesize  $\psi$}
  \put(24,44){\footnotesize $\Psi_{\cal C}$ }
            \end{overpic} 
              \caption{The parameters of the optical channel model.}\label{fig2}    
\end{figure}

The optical receiver can be described as an effective area $A_{\mbox{\scriptsize eff}}$ which collects the incident light at the angle $\psi$. This area is defined as 
\begin{equation}\label{eq-OW-3}
A_{\mbox{\scriptsize eff}}(\psi)=f(\psi)A_r \cos \psi, \quad \abs{\psi}\leqslant\Psi_{\cal C},
\end{equation}
where $A_r$ is the detector active area, and $f(\psi)$ mimics the light concentrator gain for a standard case which is given by
\begin{equation}\label{eq-OW-4}
f(\psi)=
\left\{\begin{array}{llll}
\displaystyle \frac{n^2}{\sin^2 \Psi_{\cal C}}\qquad\quad \mbox{if}\, \abs{\psi}\leqslant\Psi_{\cal C},\\
0 \qquad\qquad\qquad  \mbox{if}\, \abs{\psi}>\Psi_{\cal C}.
\end{array}\right.
\end{equation}
Here, $\Psi_{\cal C}$ is the half-angle field-of-view (FOV) of the optical receiver, and $n$ is the refractive index of the sea water.

Marine light propagation conforms to high attenuation caused by scattering and absorption. Those phenomena depend on the light beam and water properties. We use an exponential attenuation model to describe it, which provides an accurate estimate of the optical power in clear ocean waters where the absorption is predominant\citep{Mob:94},\citep{GKBLR:13}.

We settle the formulation of the channel loss as follows
\begin{equation}\label{eq-OW-5a}
L_{\mbox{\scriptsize ch}}=\exp(-K_{a} \bar{d}),
\end{equation}
where $K_{a}$ is the attenuation coefficient\citep{Mob:94}, and $\bar{d}$ denotes the distance between the transmitter and receiver. We assume that $\phi=\psi$, {\it i.e.} the optical transmitter attached on the AUV is always pointing up, while the one in the surface ship is always pointing down, as illustrated in Fig. \ref{fig2}.

Using \eqref{eq-OW-1}, \eqref{eq-OW-3} and \eqref{eq-OW-5a}, the optical signal strength at the receiver is given by\citep{GPR:12}
\begin{equation}\label{eq-OW-5}
P_{\mbox{\scriptsize RX}}=I_s A_{\mbox{\scriptsize eff}}L_{\mbox{\scriptsize ch}},
\end{equation}
and the average corresponding photocurrent is computed by
\begin{equation}\label{eq-OW-6}
I_{\mbox{\scriptsize b}}=RP_{\mbox{\scriptsize RX}},
\end{equation}
where $R$ is the responsivity of the receiver. 

\subsection{Effect of the solar radiation noise on the underwater optical link performance}\label{sec-optical-power-indicent}
The impact of solar radiation's background noise on the UWOC link operating in clear ocean waters can limit the BER performance for relatively low depths\citep{HKBLO:16}. Most previous works in the literature neglect the effect of solar noise. However, the accuracy of negligible background noise has not been investigated for low depths. Previous results reveal that the light of blue-green wavelengths can indeed pass through the water due to high water transparency in clear ocean waters\citep{HKBLO:16}. To investigate solar radiation's effect on the underwater optical link performance, we consider the case of clear water when sunlight passes through water and reaches a considerable level. The solar irradiance attenuates exponentially and becomes progressively diffuse with the depth. Thus, the solar spectral downwelling plane irradiance is given as\citep{HKBLO:16},\citep{MBR:16}
\begin{equation}\label{eq-OW-a1}
E_{\mbox{\scriptsize s}}(\lambda,\bar{d})=E_{\mbox{\scriptsize s}}(\lambda,0)\exp(-K_a\bar{d}),
\end{equation}
where $\bar{d}$ and $\lambda$ denote the operation depth and the wavelength, respectively. We suppose that the effect of obstruction at the detector side to restraint the solar irradiance is neglected and the receiver has a bandpass filter with bandwidth $\Delta \lambda$. Then, the optical power of the solar noise through the receiver is equal to
\begin{equation}\label{eq-OW-a2}
P_{\mbox{\scriptsize b}}=E_{\mbox{\scriptsize s}}(\lambda_0,\bar{d})\epsilon_t\Delta\lambda\exp(-K_a\bar{d})A_{\mbox{\scriptsize eff}}.
\end{equation}
Here, $\epsilon_t$ represents the water transmittance and is typically set to $95\%$ for $\lambda_0=532${\si nm} based on the experimental data given in\citep{HKBLO:16},\citep{MBR:16}, and $E_{\mbox{\scriptsize s}}(\lambda_0,0)=0.7645${\si W/m$^2\times$nm} is the solar noise power in the sea surface. This solar noise is usually described by a Gaussian process\citep{MBR:16} whose variance is estimated by
\begin{equation}\label{eq-OW-a3}
\sigma_{\mbox{\scriptsize b}}^2=2eRP_{\mbox{\scriptsize b}}B,
\end{equation}
where $B$ and $e$ denote the bit rate and the charge of an electron, respectively.

The optical channel is supposed to adopt IM/DD based on OOK, then the signal-to-noise ratio (SNR) at the detector side is given by~\citep{GPR:12},\citep{KaB:97}
\begin{equation}\label{eq-OW-a4}
\mbox{SNR}=\frac{I_{\mbox{\scriptsize b}}^2}{\sigma_{\mbox{\scriptsize b}}^2}.
\end{equation}

The SNR for OOK is proportional to the BER by\citep{GPR:12,CrF:15,YCT:14,CHWF:18}
\begin{equation}\label{eq-OW-a5}
\mbox{BER}=Q\Big(\sqrt{\mbox{SNR}}\Big),
\end{equation}
where $Q(.)$ is the tail probability of the normal distribution Q-function which is given by
\begin{equation*}
\displaystyle Q(x)\!\!=\!\!\!\int_x^{+\infty}\!\!\!\exp(-t^2/2)\der\! t.
\end{equation*}
For a given BER and combining \eqref{eq-OW-a3} and \eqref{eq-OW-a4}, the bit rate $B$ can be calculated by
\begin{equation}\label{eq-OW-a6}
B=\frac{1}{2eRP_{\mbox{\scriptsize b}}}\left[\frac{I_{\mbox{\scriptsize b}}}{Q^{-1}(BER)}\right]^2,
\end{equation}
where $I_{\mbox{\scriptsize b}}$ can be computed using \eqref{eq-OW-1} -- \eqref{eq-OW-6} and $P_{\mbox{\scriptsize b}}$ is given by \eqref{eq-OW-a2}. 

\subsection{Cone-Shaped Beam Region}\label{sec-cone}
Figs. \ref{fig-contour}\textbf{a)} and \ref{fig-contour}\textbf{b)} show the contour plots of the optical link bit rate in logarithm scale with respect to the transmitter-receiver position by assuming $\ber = 10^{-4}$. Table \ref{sample-table-OW} gives the optical parameters used to plot these contour maps. As shown in Fig. \ref{fig-plot1}\textbf{a)}, we define a right circular region called cone-shaped region using this optical range such that the maximum receiver pointing error yields when $\psi=\Psi_{\cal C}$. Consequently, a received signal strength greater or equal to the threshold one will be sufficiently achieved if the AUV transmitter stays inside this region. The half vertex of $\cal C$ is given by the position of the ship on the water surface. We define the vector of the normal direction of the cone axis as follows $\textbf{e}=\begin{bmatrix} 0 & 0 & 1 \end{bmatrix}^T$\!\!. Hence, $\cal C$ is defined by its half aperture angle $\Psi_{\cal C}$ and by its slant height $d_{\cal C}$. Since $\ib$ is a transcendental function of $d$, the value of the slant height ${d_{\cal C}}$ of ${\cal C}$ is evaluated numerically through the intersection of the $\bitrate$ distribution along the distance $d$ as shown in Fig. \ref{fig-plot1}\textbf{b)}. Finally, we
can find that this bit rate $\bitrate\!=\!10${\si Mbps} which is equivalent at $``7"$ in logarithm scale is maintaining at a range of $\maxd \approx 4.4${\si m}. Within this range, we obtain that the height $h$ of the cone which is $h_{\cal C}\!=\!d\!\times\!\cos \Psi_{\cal C}\!=\!3.8157${\si m}.

\begin{table}[!t]
\caption{Parameters of the optical link.}
\begin{center}
\begin{tabular}{| c || c c c c|}
  \hline
Transmitter  & $P=0.1${\si W}~~  & ~$K_a=0.15${\si m$^{-1}$}~   & $\Phi_{1/2}=15^{\circ}$  & $\Delta\lambda=30${\si nm}     \\ \hline
~Receiver~~  &  $A_r=1${\si cm}$^2$ & $\Psi_{\cal C}=30^{\circ}$ & $~n=1.52~$ & ~$R=0.6${\si A/W}     \\ \hline
\end{tabular}
\end{center}
\label{sample-table-OW}
\end{table}

\begin{figure*}[!t]
   \begin{minipage}[c]{0.45\linewidth}  
           \centering
      \begin{overpic}[scale=0.26]{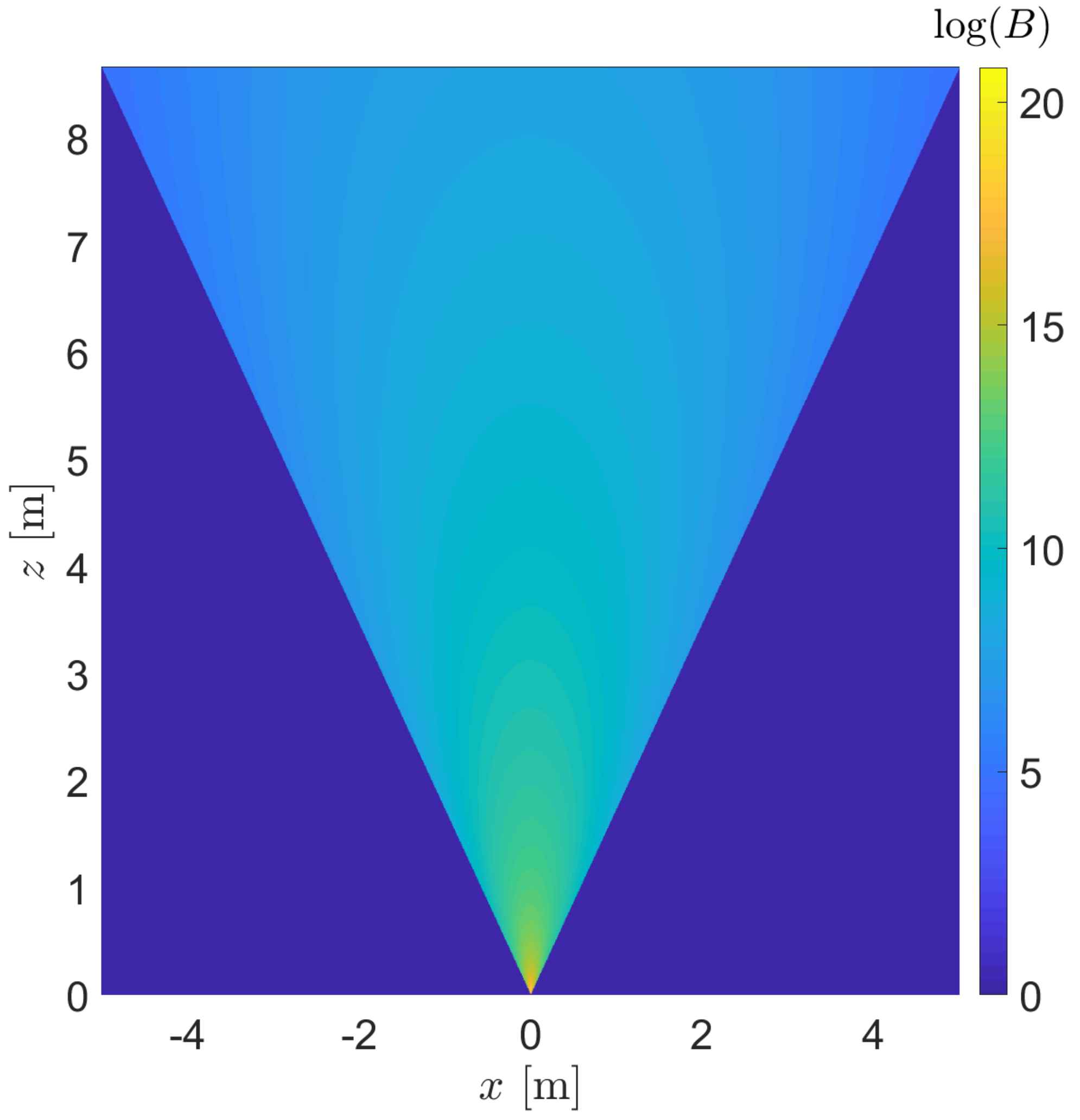}
      \put(20,0){\scriptsize  \textbf{a)}}
            \end{overpic}  
       \end{minipage}\hfill 
         \begin{minipage}[c]{0.48\linewidth}
         \centering
      \begin{overpic}[scale=0.32]{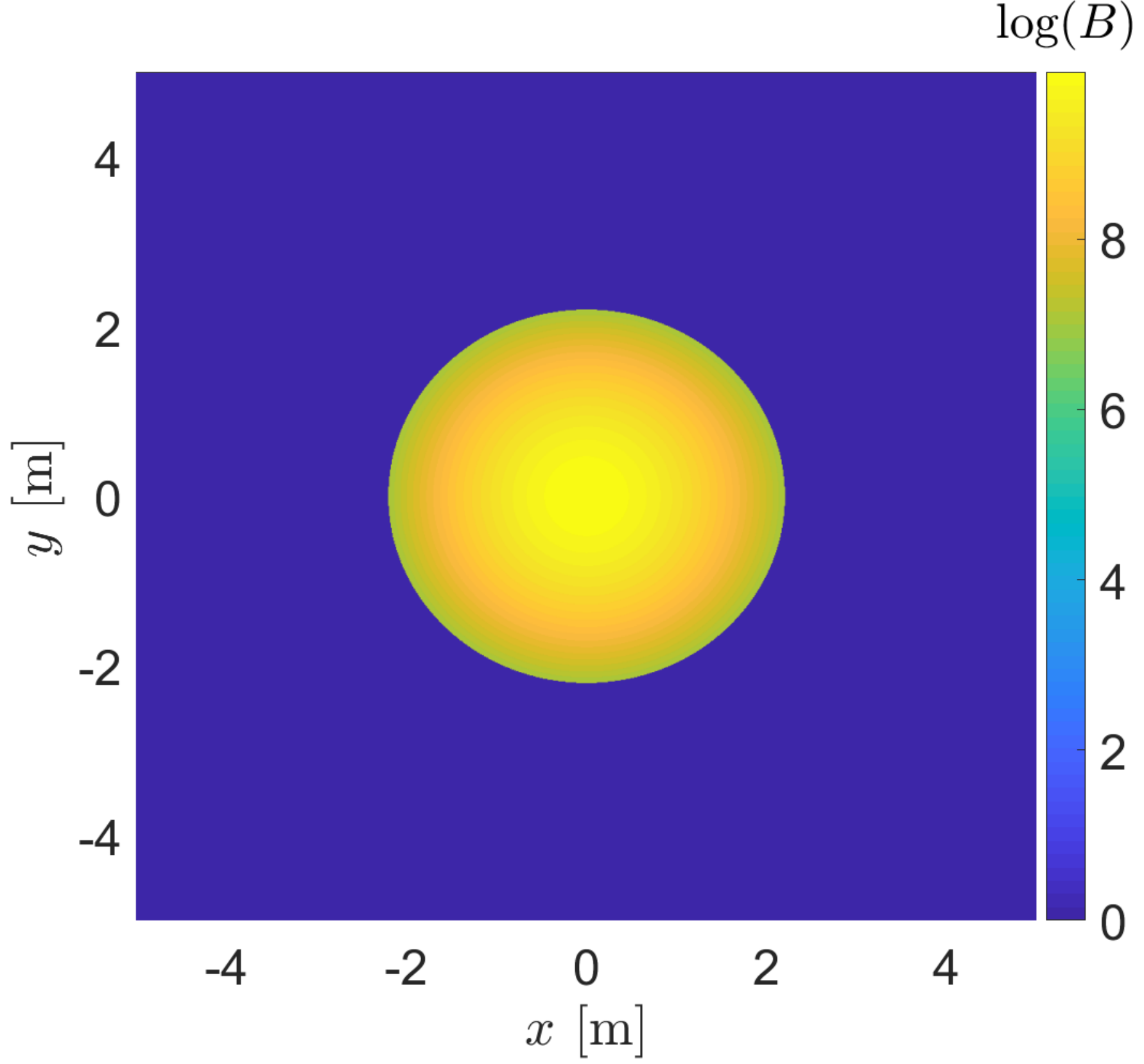}
      \put(20,-1){ \scriptsize \textbf{b)}}
            \end{overpic}              
      \end{minipage}            
      \caption{Contour maps of bit rates in logarithm scale ($\bitrate\!=\!10$ {\si Mbps} of bit rate is equivalent at $``7"$ in logarithm scale): \textbf{a)} Side view;  \textbf{b)} Top view.} \label{fig-contour}
\end{figure*}

\begin{figure*}[!t]
   \begin{minipage}[c]{0.45\linewidth} 
   \centering 
      \begin{overpic}[scale=0.70]{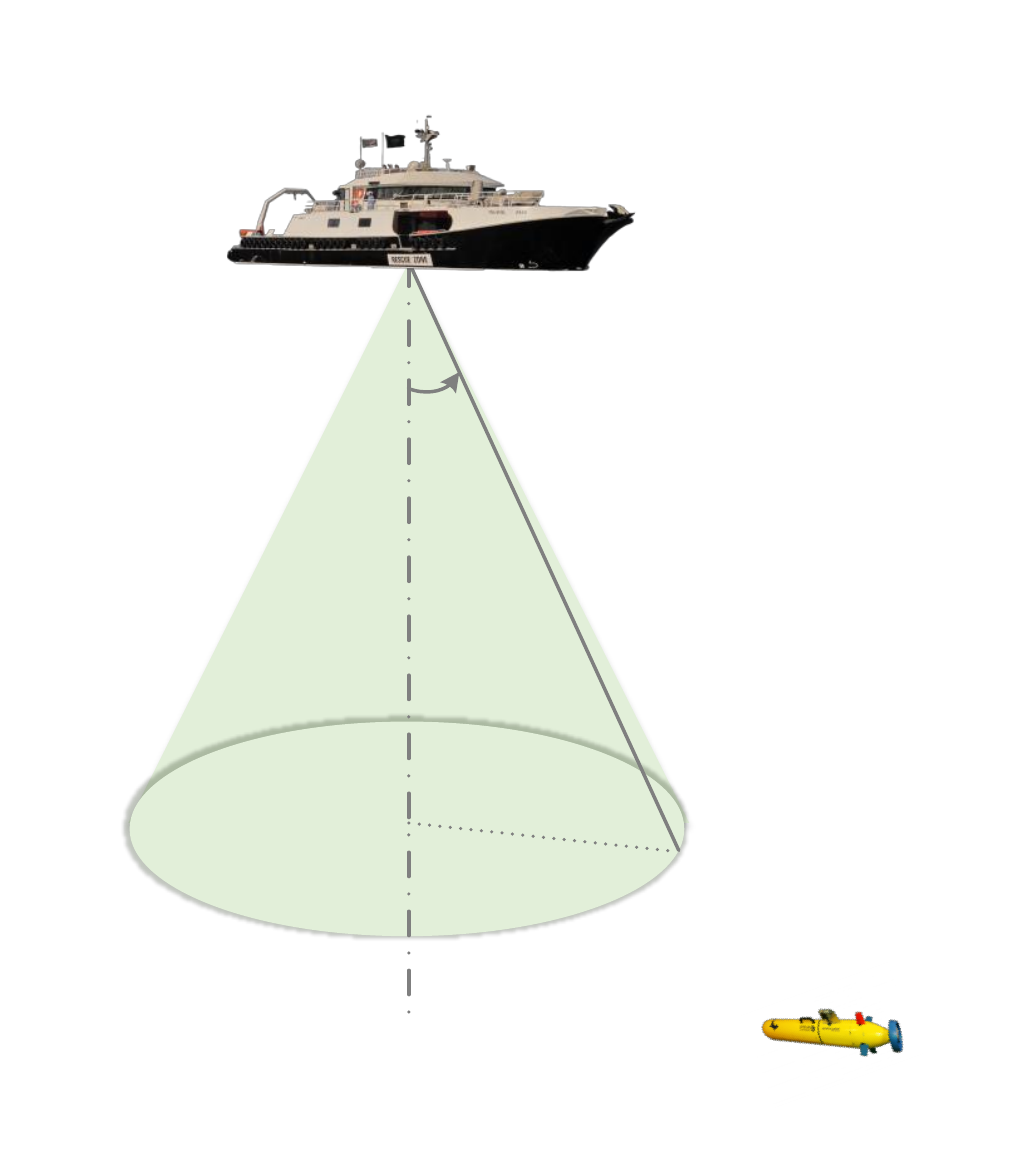}
            \put(16,42){$\cal C$}
  \put(31.5,66){\footnotesize $\Psi_{\cal C}$ }
    \put(46,54){\footnotesize $d_{\cal C}$ }
        \put(24,4){\footnotesize Cone axis }
        \put(10,1){ \scriptsize \textbf{a)}}
            \end{overpic}  
       \end{minipage}\hfill
         \begin{minipage}[c]{0.48\linewidth}
      \begin{overpic}[scale=0.24]{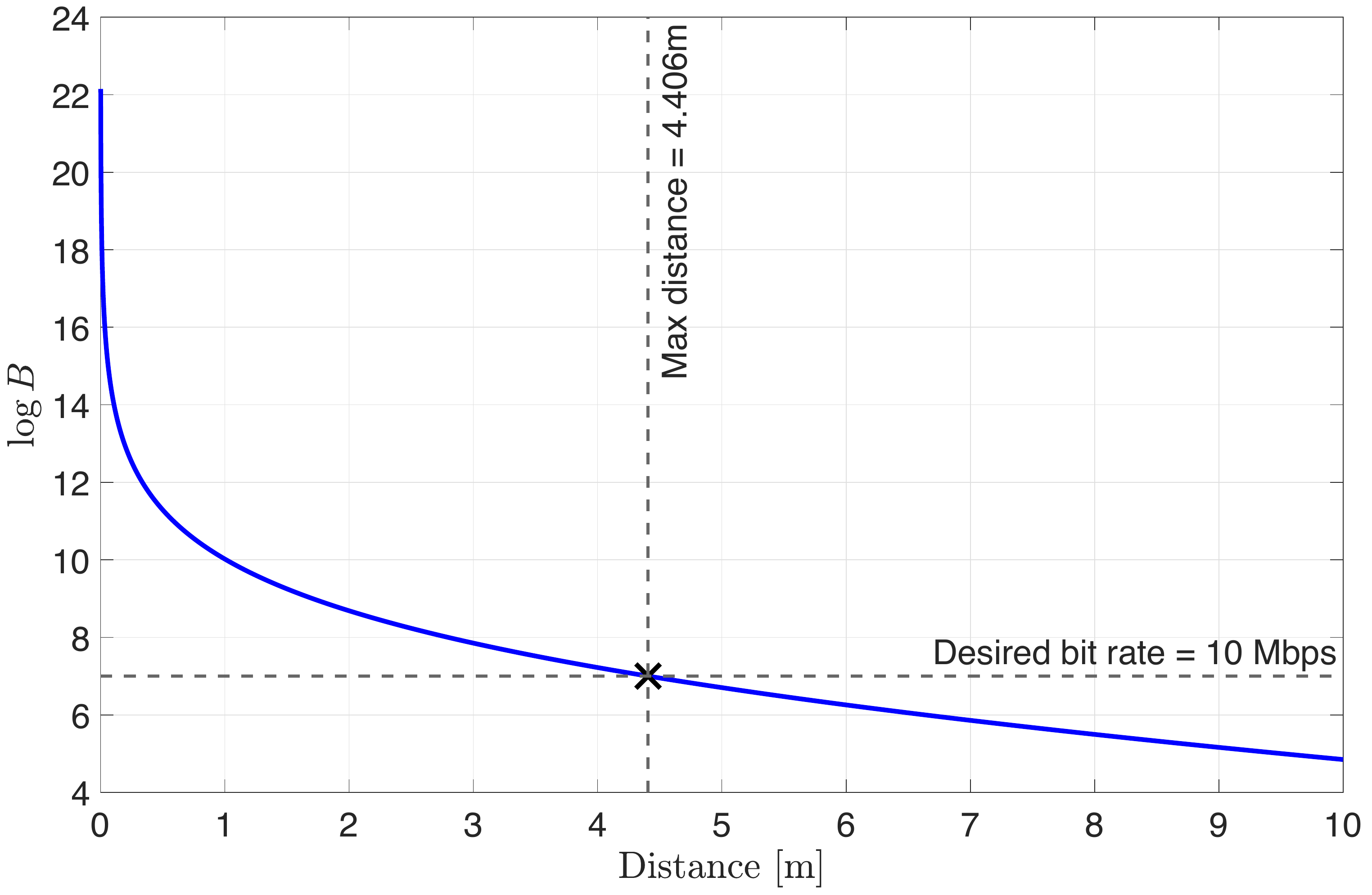}
      \put(20,1){ \scriptsize \textbf{b)}}
            \end{overpic}  
   \end{minipage}    
      \caption{\textbf{a)} Cone-shaped beam region $\cal C$ with its main parameters;  \textbf{b)} Slant height value ${d_{\cal C}}$ of ${\cal C}$.} \label{fig-plot1}
\end{figure*}

\section{Methodology}\label{method}
Here, our strategy is to control the AUV to drive it inside the cone-shaped beam region $\cal C$ and then maintain it there. First, the dynamic model of the AUV is introduced. Then, we derive two different controllers for the AUV transmitter such that it holds a good position with regards to the mobile ship receiver and stays within the cone-shaped beam region $\cal C$ to guarantee an optical wireless communication link with the desired bit rate. 

\subsection{Model of the Autonomous Underwater Vehicle (AUV)}
We define an earth-fixed $\{\cal W\}$ reference and a body-fixed reference $\{\cal B\}$ for the motion control of the AUV, as shown in Fig. \ref{fig-UnderW}. The AUV dynamics in $\{\cal B\}$ frame is given as \citep{Fos:94}, \citep{CGEC:10}
\begin{equation}\label{eq-Mod1}
M\dot{\nu}+C(\nu)\nu+D(\nu)\nu+g(\eta)=\tau+\tau_w,
\end{equation}
where $\eta$ and $\nu$ are the position and velocity of the AUV in $\{\cal W\}$ and $\{\cal B\}$, respectively. $M$ represents the inertia matrix, $C(\nu)$ is  the matrix of Coriolis' and centripetal terms and $D(\nu)$ represents the damping matrix. $\tau$ is the input vector signals, $g(\eta)$ is an unknown vector of restoring forces and $\tau_w$ is the bounded external input (force/torque) disturbances. Table \ref{para}  provides the main parameters of the AUV model \citep{CGEC:10}.

\begin{figure}[!t]
\centering
      \begin{overpic}[scale=0.26]{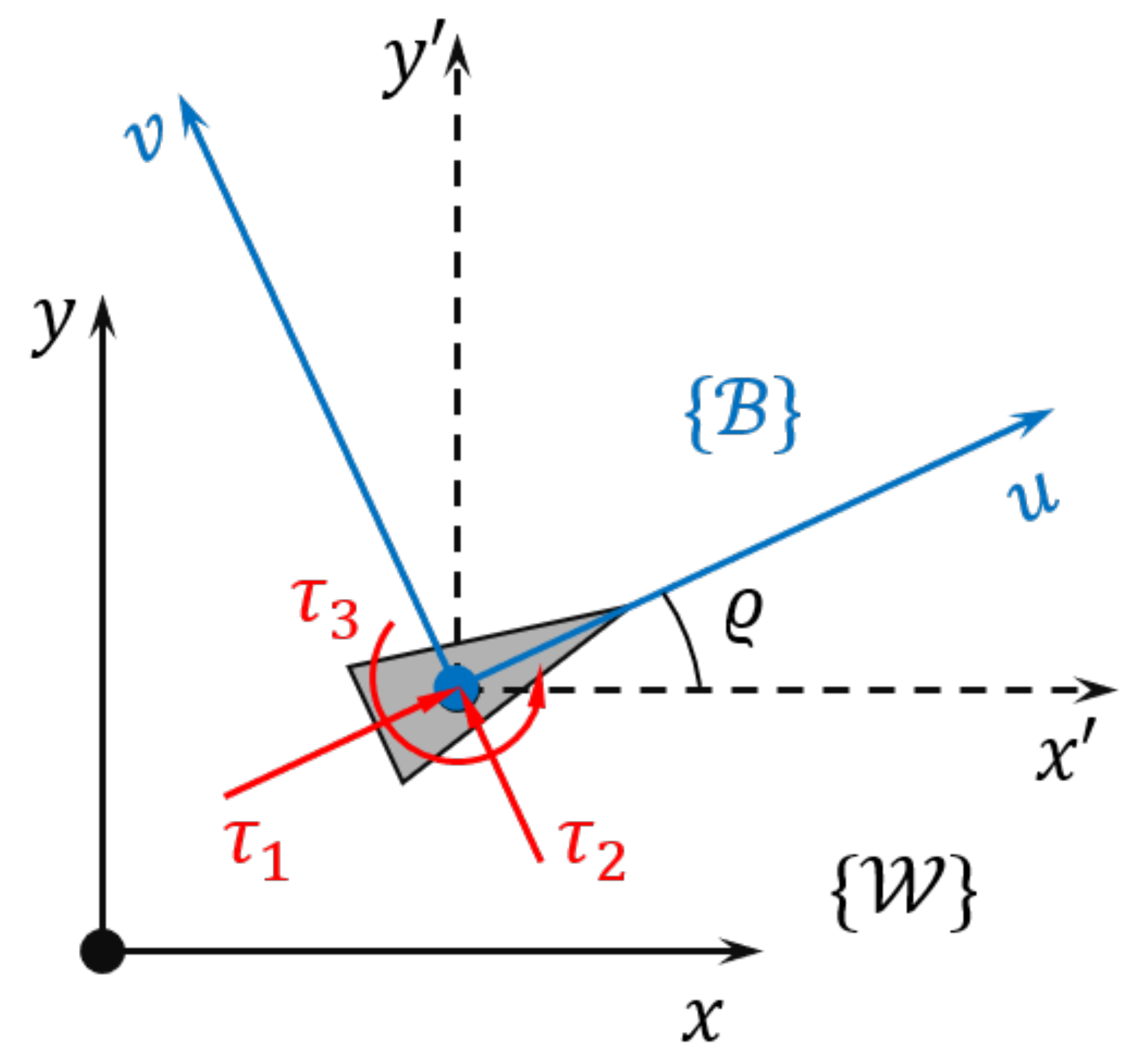}
            \end{overpic}  
              \caption{Underwater vehicle model in horizontal plane.}\label{fig-UnderW}    
\end{figure}

\begin{table}[!t]
\caption{Nomenclature.}
\begin{center}
\begin{tabular}{| c || c || c |}
  \hline 
\textbf{~~Symbol~~}  & \textbf{~~~Description~~~}  & \textbf{~~Unit~~}   \\ \hline  
$x$ &  Position in surge & \si{m} \\  
~~$y$~~ &Position in sway& \si{m} \\  
~~~$\varrho$~~~ &Yaw &\si{rad}\\  \hline
$u$ &  ~~~~Linear velocity in surge~~~~  &~~~~\si{m}\si{s^{-1}}~~~~  \\  
$v$ & ~~Linear velocity in sway~~ &\si{m}\si{s^{-1}} \\  
$r$ &~~Yaw velocity  &\si{rad}\si{s^{-1}} \\  \hline
$\tau_1$ &  Force in surge & \si{N}   \\  
$\tau_2$ &~Moments in yaw~ & \si{N}\si{m} \\  
$\tau_{{w}_1}$&~Disturbance in surge~ &\si{N}\\  
$\tau_{{w}_2}$ &~Disturbance in sway~ &\si{N}\\  
$\tau_{{w}_3}$ &~Disturbance in yaw~ &\si{N}\si{m}\\  \hline
$m$ &  Weight of the AUV & \si{kg}  \\  
$I_z$ &Moments of inertia in yaw &\si{kg}\si{m^2}\\   \hline
$X_{\dot{u}}$ &Added mass in surge &\si{kg}\\  
$Y_{\dot{v}}$ &  Added mass in sway  & \si{kg} \\  
$N_{\dot{r}}$ &  Added mass in yaw & \si{kg}\si{m^2} \\   \hline
$X_{u}$ &Linear damping coefficients in surge  &\si{kg}\si{s^{-1}}\\
$Y_{v}$ &Linear damping coefficients in sway  &\si{kg}\si{s^{-1}}\\
$N_{r}$ &  Linear damping coefficients in yaw  &\si{kg}\si{m^2}\si{s^{-1}}\\\hline
$X_{u\abs{u}}$ &~~~Quadratic damping coefficients in surge~~~  &\si{kg}\si{m^{-1}}\\
$Y_{v\abs{v}}$ &Quadratic damping coefficients in sway  &\si{kg}\si{m^{-1}}\\
$N_{r\abs{r}}$ & Quadratic damping coefficients in yaw &\si{kg}\si{m^{2}}\\ \hline
$\cal W$ & Earth-fixed frame &\\ 
$\cal B$ & Body-fixed frame &\\ \hline
\end{tabular}
\end{center}
\label{para}
\end{table}

We consider the AUV transmitter in the horizontal plane at a constant depth, such as ocean floor applications. Then, 
\begin{align*}
& \eta=\begin{bmatrix} x & y & \varrho \end{bmatrix}^T\!\!, \qquad \nu=\begin{bmatrix} u& v & r \end{bmatrix}^T\!\!,  \qquad \tau\!=\!\begin{bmatrix} \tau_1 & \tau_2  & \tau_3 \end{bmatrix}^T\!\!, 
\notag \\& w\!=\!\begin{bmatrix} w_1 & w_2 & w_3 \end{bmatrix}^T\!\!, \qquad g(\eta)\!=\!0, \qquad  M\!=\!\begin{bmatrix} m_{11}& 0 & 0 \\ 0 & m_{22} & 0 \\ 0 & 0 & m_{33}\end{bmatrix}\!\!,  \notag \\& D(\nu)\!=\!\!\!\begin{bmatrix} d_{11}& 0 & 0 \\ 0 & d_{22} & 0 \\ 0 & 0 & d_{33}\end{bmatrix}\!\!, \qquad C(\nu)\!=\!\!\!\begin{bmatrix} 0& 0 & -m_{22}v \\ 0 & 0 & m_{11}u \\ m_{22}v & -m_{11}u & 0\end{bmatrix}\!\!,
\end{align*}
with $m_{11}=m-X_{\dot{u}}$, $m_{22}=m-Y_{\dot{v}}$, $m_{33}=I_z-N_{\dot{r}}$, $d_{11}=-X_{u}-X_{u\abs{u}}\abs{u}$, $d_{22}=-Y_{v}-Y_{v\abs{v}}\abs{v}$, and $d_{33}=-N_{r}-N_{r\abs{r}}\abs{r}$ \citep{CGEC:10,AgP:07,LaS:07}. 

The AUV position in initial coordinate ${\cal W}$ frame is given as follows
\begin{equation}\label{eq-Mod2}
\dot{\eta}=R(\varrho)\nu,
\end{equation}
with the transformation matrix $R$ is given as
\begin{equation}\label{eq-Mod2a}
R(\varrho)\!=\!\begin{bmatrix} \cos\varrho & -\sin\varrho & 0\\ \sin\varrho & \cos\varrho & 0\\ 0 & 0& 1\end{bmatrix}.
\end{equation}

From the kinematic transformations of the state variables and the model parameter, the formulation of the AUV dynamics is transformed to ${\cal W}$ frame as follows 
\citep{Fos:02},\citep{CCCTL:17}
\begin{align}\label{eq-Mod2aa}
&\ddot{\eta}=R(\varrho)\dot{\nu}+\dot{R}(\varrho)\nu,\notag \\
&M(\eta)=R^{-T}(\varrho)MR^{-1}(\varrho), \notag \\
&C(\nu,\eta)=R^{-T}(\varrho)\left[C(\nu)-MR^{-1}(\varrho)\dot{R}(\varrho)\right]R^{-1}(\varrho),\notag \\
&D(\nu,\eta)=R^{-T}(\varrho)D(\nu)R^{-1}(\varrho), \notag\\
&g(\eta)=R^{-T}(\varrho)g(\eta),\notag\\
&\tau(\eta)=R^{-T}(\varrho)\tau.
\end{align}

\subsection{Tracking Control Strategy}
The control of AUVs has several issues due to the nonlinear dynamics and model uncertainties\citep{Ver:09}. The proportional derivative (PD) controller is the most used technique to control the dynamic marine vehicles due to its design simplicity and performance results. However, it lacks better performance results in the presence of nonlinearity or disturbances in the system dynamics\citep{Ver:09}. To overcome this drawback, we propose a NLPD controller to reinforce the classical PD controller by implementing a saturation function to achieve the tracking problem's robustness performance. 

Our objective is to design a nonlinear proportional derivative input $\bf{\tau}$ based on proportional-derivative control for trajectory tracking such that the AUV enters first in the cone-shaped region ${\cal C}$ and then stays in the area.

We  consider the state of the mobile ship receiver as following $\begin{bmatrix}x_{\mbox{\tiny RX}}& y_{\mbox{\tiny RX}}& \varrho_{\mbox{\tiny RX}} \end{bmatrix}^T$, where $\eta_{\mbox{\tiny RX}}\!=\!\begin{bmatrix}x_{\mbox{\tiny RX}}& y_{\mbox{\tiny RX}}\end{bmatrix}^T$ is the position of the surface ship receiver in $x\!-\!y$ plane and $\varrho_{\mbox{\tiny RX}}$ is its heading angle. We assume that the surface ship vehicle can change its heading angle $\varrho_{\mbox{\tiny RX}}$, and moves around on the sea surface area with a constant position and linear velocity given by $\eta_{\mbox{\scriptsize ref}}$ and $\bf{v_{\mbox{\scriptsize ref}}}$, respectively. The position of the AUV is given by $\eta=\begin{bmatrix}x& y& \varrho \end{bmatrix}^T\in \R^3$. The NLPD controller requires full state measurement. Here, we consider that the AUV knows continuously the position of the mobile ship and its velocity. We propose the NLPD controller for the trajectory tracking as follows\citep{GTCC:19}
\begin{align}\label{eq-OW-aa}
 {\bf \tau}= R^T(\varrho) &\Big[ M(\eta)\ddot{\eta}_{\mbox{\scriptsize ref}}+C(\nu,\eta)\dot{\eta}_{\mbox{\scriptsize ref}}+D(\nu,\eta)\dot{\eta}_{\mbox{\scriptsize ref}} +K_p(.)\widetilde{\eta}+K_v(.)\widetilde{\nu}\Big],
\end{align}
where $\widetilde{\eta}=\eta_{\mbox{\scriptsize ref}}-\eta$ and $\widetilde{\nu}=\bf{v}_{\mbox{\scriptsize ref}}-\nu$ are the position and velocity errors, respectively. $K_p$ and $K_v$ are the control gains, $\bf{v}_{\mbox{\scriptsize ref}}$ is the reference velocity, and $\eta_{\mbox{\scriptsize ref}}$ is the reference position which has to be determined.

The following theorem provides the conditions that guarantee the $\cHi$-stability for the trajectory tracking system.
\begin{theorem}\label{thm}
The NLPD controller \eqref{eq-OW-aa} asymptotically stabilizes system \eqref{eq-Mod1}-\eqref{eq-Mod2} for trajectory tracking where the controller gain matrices $K_p(.)$ and $K_v(.)$ have the following form
\begin{equation*}
K_p(.)=\diag\begin{bmatrix} k_{p1}(.)  & k_{p2}(.)  & k_{p3}(.)  \\  \end{bmatrix}>0,
\end{equation*}
\begin{equation*}
K_v(.)=\diag\begin{bmatrix} k_{v1}(.) & k_{v2}(.)  &  k_{v3}(.)  \\  \end{bmatrix}>0,
\end{equation*}
with $k_{pj}(.)$ and $k_{vj}(.)$ are defined as follows
\begin{equation*}
k_{pj}(.)= 
\begin{cases}
    a_{pj}\abs{\widetilde{\eta}_j(t)}^{(\mu_{pj}-1)},& \text{if } \abs{\widetilde{\eta}_j(t)}\geqslant b_{pj}\\
    a_{pj}b_{pj}^{(\mu_{pj}-1)},              &  \text{if } \abs{\widetilde{\eta}_j(t)}\geqslant b_{pj}\\
     & \forall \mu_{pj} \in [0 \,, 1]
\end{cases}
\end{equation*}
\begin{equation*}
k_{vj}(.)= 
\begin{cases}
    a_{vj}\abs{\widetilde{\nu}_j(t)}^{(\mu_{vj}-1)},& \text{if } \abs{\widetilde{\nu}_j(t)}\geqslant b_{vj}\\
    a_{vj}b_{vj}^{(\mu_{vj}-1)},              &  \text{if } \abs{\widetilde{\nu}_j(t)}\geqslant b_{vj}\\
    & \forall \mu_{vj} \in [0 \,, 1]
\end{cases}
\end{equation*}
and $a_{pj}$, $a_{vj}$, $b_{pj}$ and $b_{vj}$ are positive constants.
\end{theorem}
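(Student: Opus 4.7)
The natural plan is a Lyapunov-based argument of the kind standard for Euler–Lagrange and marine-vehicle systems, adapted to the state-dependent gains.

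First I would form the closed-loop error dynamics. Substituting the control law \eqref{eq-OW-aa} into the AUV dynamics rewritten in the world frame (the last block of \eqref{eq-Mod2aa}, with $g(\eta)=0$ and, in the nominal case, $\tau_w=0$), and using that $R(\varrho)$ cancels the outer $R^T(\varrho)$ factor in $\tau$, the feedforward terms $M(\eta)\ddot\eta_{\mathrm{ref}}$, $C(\nu,\eta)\dot\eta_{\mathrm{ref}}$, $D(\nu,\eta)\dot\eta_{\mathrm{ref}}$ combine with the corresponding terms in the plant to produce the tracking-error equation
\begin{equation*}
M(\eta)\dot{\widetilde\nu}+C(\nu,\eta)\widetilde\nu+D(\nu,\eta)\widetilde\nu+K_p(\cdot)\widetilde\eta+K_v(\cdot)\widetilde\nu=0,
\end{equation*}
where $\widetilde\eta=\eta_{\mathrm{ref}}-\eta$ and $\widetilde\nu$ is the velocity error consistent with $\dot{\widetilde\eta}=\widetilde\nu$ after accounting for \eqref{eq-Mod2}. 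Clarifying the exact form of the error between the body-frame and world-frame velocities (via $R(\varrho)$) is a small bookkeeping step that needs to be carried out carefully before proceeding.

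Second, I would propose the Lyapunov function
\begin{equation*}
V(\widetilde\eta,\widetilde\nu)=\tfrac{1}{2}\,\widetilde\nu^{\,T} M(\eta)\,\widetilde\nu+\sum_{j=1}^{3}\int_{0}^{\widetilde\eta_j}k_{pj}(s)\,s\,ds.
\end{equation*}
The integral form is essential because $K_p(\cdot)$ is state-dependent; thanks to the positivity of $a_{pj},b_{pj}$ and the piecewise definition (saturated below the threshold $b_{pj}$, power-law above it), each integrand is nonnegative and continuous across the threshold, so $V$ is continuously differentiable, positive definite in $(\widetilde\eta,\widetilde\nu)$, and radially unbounded. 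I would then differentiate $V$ along the closed-loop trajectories and invoke the skew-symmetry property of marine vehicle models, namely that $\dot M(\eta)-2C(\nu,\eta)$ is skew symmetric (which passes through the $R^{-T}\cdot R^{-1}$ congruence used in \eqref{eq-Mod2aa}). This cancels the Coriolis contribution, while the damping term $D(\nu,\eta)$ is positive semidefinite by physical assumptions. The cross term $\widetilde\nu^{\,T} K_p(\cdot)\widetilde\eta$ arising from the time-derivative of the integral exactly cancels the feedback term $\widetilde\nu^{\,T} K_p(\cdot)\widetilde\eta$ coming from the error equation, leaving
\begin{equation*}
\dot V=-\widetilde\nu^{\,T}\bigl[D(\nu,\eta)+K_v(\cdot)\bigr]\widetilde\nu\;\le\;0,
\end{equation*}
since $K_v(\cdot)$ is diagonal with strictly positive entries by the construction in the statement.

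Third, I would invoke LaSalle's invariance principle to upgrade stability to asymptotic stability. On the largest invariant set contained in $\{\dot V=0\}$, one has $\widetilde\nu\equiv 0$, hence $\dot{\widetilde\nu}\equiv 0$; feeding this back into the error equation yields $K_p(\cdot)\widetilde\eta=0$, and because $K_p(\cdot)$ is diagonal with strictly positive diagonal entries (whether in the saturated or the power-law branch), this forces $\widetilde\eta\equiv 0$. Global asymptotic stability of the origin of the tracking error system then follows from the radial unboundedness of $V$.

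The main obstacle I expect is the handling of the piecewise gain definitions around the thresholds $b_{pj}$ and $b_{vj}$: I need to verify that $k_{pj}(\cdot)$ and $k_{vj}(\cdot)$ are bounded, measurable, and uniformly positive on compact sets so that LaSalle's theorem applies in its nonsmooth version (or, equivalently, that $V$ is $C^1$ and $\dot V$ is continuous across the switching surfaces). A secondary subtlety is the proper matching of $\dot{\widetilde\eta}$ with $\widetilde\nu$ through the rotation $R(\varrho)$, which must be reconciled with the world-frame reformulation in \eqref{eq-Mod2aa} before the cross-term cancellation in $\dot V$ can be invoked cleanly.
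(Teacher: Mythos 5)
Your proposal follows essentially the same route as the paper: the same Lyapunov function $V=\tfrac12\dot{\widetilde\eta}^{T}M(\eta)\dot{\widetilde\eta}+\int_{0}^{\widetilde\eta}\zeta^{T}K_p(\zeta)\,\mathrm{d}\zeta$, the same skew-symmetry/cancellation argument yielding $\dot V=-\dot{\widetilde\eta}^{T}\bigl(D(\nu,\eta)+K_v(\cdot)\bigr)\dot{\widetilde\eta}$, and the same Krasovskii--LaSalle conclusion, with your explicit treatment of the invariant set and of the continuity of the piecewise gains actually being more detailed than the paper's. The only element you omit is the paper's additional ${\mathcal H}_{\infty}$-cost inequality handling the disturbance term $\tau_w$, which is layered on top of the same expression for $\dot V$ and is not needed for the asymptotic-stability claim stated in the theorem.
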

\begin{proof}
By substituting the NLPD control law \eqref{eq-OW-aa} in \eqref{eq-Mod1}-\eqref{eq-Mod2} leads to 
\begin{equation*}
M(\eta)\ddot{\widetilde{\eta}}=-C(\nu,\eta)\dot{\widetilde{\eta}}-D(\nu,\eta)\dot{\widetilde{\eta}}-K_p(.)\widetilde{\eta}-K_v(.)\dot{\widetilde{\eta}}+\tau_{w},
\end{equation*}
which can be rewritten as
{\small
\begin{equation}\label{eq-1}
\begin{bmatrix}
\dot{\widetilde{\eta}}\\ \ddot{\widetilde{\eta}}
\end{bmatrix}\!\!=\!\!
\begin{bmatrix}
\dot{\widetilde{\eta}}\\
\!-M(\eta)^{-1}\!\Big(C(\nu,\eta)\dot{\widetilde{\eta}}\!\!+\!D(\nu,\eta)\dot{\widetilde{\eta}}\!+\!K_p(.)\widetilde{\eta}\!+\!K_v(.)\dot{\widetilde{\eta}}\!-\!\tau_{w}\Big)
\end{bmatrix}\!\!.
\end{equation}
}

Consider the following Lyapunov candidate function\citep{KeC:96}
\begin{equation}\label{lyap1}
V(\widetilde{\eta},\dot{\widetilde{\eta}})=\frac{1}{2}\dot{\widetilde{\eta}}^TM(\eta)\dot{\widetilde{\eta}}+\int_{0}^{\widetilde{\eta}} \zeta^TK_p(\zeta)\der \zeta,
\end{equation}
where
\begin{align*}
\int_{0}^{\widetilde{\eta}}\! \!\! \zeta^TK_p(\zeta)\der \zeta\! \!=\!\!\int_{0}^{\widetilde{\eta}_1}\!\! \!\zeta_1^Tk_{p1}(\zeta_1)\der \zeta_1+\int_{0}^{\widetilde{\eta}_2} \! \!\! \zeta_2^Tk_{p2}(\zeta_2)\der \zeta_2+\int_{0}^{\widetilde{\eta}_3}\!\! \!\zeta_3^Tk_{p3}(\zeta_3)\der \zeta_3.
\end{align*}

The function $V(\widetilde{\eta},\dot{\widetilde{\eta}})$ is radially unbounded and positive definite\citep{KeC:96,CCCTL:17,GTCC:19}. The first term $\dot{\widetilde{\eta}}^TM(\eta)\dot{\widetilde{\eta}}$ is a positive definite function with respect to $\dot{\widetilde{\eta}}$ since $M(\eta)$ is a positive definite matrix. The integral term $\int_{0}^{\widetilde{\eta}} \zeta^TK_p(\zeta)\der \zeta$ is a potential energy induced by the position error-driven part of the controller. It is radically unbounded positive definite function\citep{KeC:96}. By calculating the time derivative of the Lyapunov candidate function \eqref{lyap1} along the trajectories of \eqref{eq-1}, we obtain
\begin{equation}\label{lyap2}
\dot{V}(\widetilde{\eta},\dot{\widetilde{\eta}})=\dot{\widetilde{\eta}}^TM(\eta)\ddot{\widetilde{\eta}}+\frac{1}{2}\dot{\widetilde{\eta}}^T\dot{M}(\eta)\dot{\widetilde{\eta}}+\widetilde{\eta}^TK_p(.)\dot{\widetilde{\eta}}.
\end{equation}

Considering the fact that the AUV is moving at slow speed induced with zero wave frequency we have $\dot{M}(\eta)=0$ (for more details see \citep{Fos:02}, \citep{KeC:96}), since the matrix $\frac{1}{2}\dot{M}(\eta)-C(\nu,\eta)$ is skew symmetric and $D(\nu,\eta)\!>\!0$, then, by substituting the closed-loop tracking error equation \eqref{eq-1} in \eqref{lyap2}, we have
\begin{equation}\label{lyap3}
\dot{V}(\widetilde{\eta},\dot{\widetilde{\eta}})=-\dot{\widetilde{\eta}}^T\Big(D(\nu,\eta)+K_v(.)\Big)\dot{\widetilde{\eta}}+\dot{\widetilde{\eta}}^T\!\tau_w.
\end{equation}

Now, the $\cHi$ cost \citep{BEFB:94} is defined as follows
\begin{equation}\label{cost1}
J=\int_{0}^{\infty} (\dot{\widetilde{\eta}}^T\dot{\widetilde{\eta}}-\varepsilon^2 \tau_w^T\tau_w) \der t,
\end{equation}
with $\varepsilon>0$ is a positive number. Therefore, we have
\begin{equation}\label{cost2}
J<\int_{0}^{\infty} (\dot{\widetilde{\eta}}^T\dot{\widetilde{\eta}}-\varepsilon^2 \tau_w^T\tau_w +\dot{V}) \der t,
\end{equation}
and it follows that a sufficient condition for $J\leqslant0$ is that
\begin{equation}\label{equation-proof-thm-11}
\dot{V}+\dot{\widetilde{\eta}}^T\dot{\widetilde{\eta}}-\varepsilon^2 \tau_w^T\tau_w\leqslant0.
\end{equation}

Using \eqref{lyap3} and \eqref{equation-proof-thm-11}, the sufficient condition can be written as
\begin{equation}\label{equation-proof-thm-12}
\begin{bmatrix} \dot{\widetilde{\eta}} \\ \tau_w \end{bmatrix}^T\!\!
\begin{bmatrix} -\Big(D(\nu,\eta)+K_v(.)\Big) +I&  I\\  (\star)  & -\varepsilon^2 I \end{bmatrix}\!
\begin{bmatrix} \dot{\widetilde{\eta}} \\ \tau_w \end{bmatrix}\!\!<\!0,
\end{equation}
where $(\star)$ is used for the blocks induced by symmetry.

Thus, a sufficient condition for $J\leqslant0$ is that the following inequality be negative definite
\begin{equation}\label{equation-proof-thm-13a}
\begin{bmatrix} -\Big(D(\nu,\eta)+K_v(.)\Big) +I&  I\\  (\star)  & -\varepsilon^2 I \end{bmatrix}\!\!<\!0.
\end{equation}
Since the gain matrix $K_v(.)\!>\!0$ and the damping matrix satisfies $D(\nu,\eta)\!>\!0$ \citep{Fos:94}. Then, the Lyapunov function $\dot{V}$ is negative semidefinite. Finally, we conclude that the equilibrium point is asymptotically stable using the Krasovskii-Lasalle theorem \citep{KeC:96,CCCTL:17}, which ends the proof.
\end{proof}

It is worth noticing that if $\mu_{pj}\!=\!\mu_{vj}\!=\!1$, the proposed NLPD controller reduces and combines the PD controller. Hence, there is a close connection between the two proposed NLPD and PD controller's merits. Further, we propose a PD controller as a control tracking input given by
\begin{align}\label{eq-OW-a8}
 {\bf \tau}=R^T(\varrho) &\Big[ M(\eta)\ddot{\eta}_{\mbox{\scriptsize ref}}+C(\nu,\eta)\dot{\eta}_{\mbox{\scriptsize ref}}+D(\nu,\eta)\dot{\eta}_{\mbox{\scriptsize ref}}+K_p(.)\widetilde{\eta}+K_v(.)\widetilde{\nu}\Big],
\end{align}
where $\mu_{pj}\!=\!\mu_{vj}\!=\!1$, and $K_p$ and $K_v$ are the control gains of the PD controller.

\begin{remark}
Thanks to the additional tuning gains parameters, the NLPD controller is proved to offer superior tracking control performance \citep{CCCTL:17}. Additionally, the proposed NLPD controller is continuously updated by a sum of the position and velocity errors, reducing its computational complexity. 
\end{remark}

\section{Simulation results and discussions}\label{simulations}
We ran and interpreted the AUV-based trajectory tracking control simulations, which show the benefits of the two proposed control methods using MATLAB/Simulink environment for all of our simulations. The Simulink model contains both proposed PD and NLPD controllers and the AUV system dynamics. The sampling time and the end time simulation are set to $0.005${\si s} and $150${\si s}, respectively. The position of the surface ship and the AUV at each time is directly used to calculate the relative position error, which is used as error feedback with both proposed PD and NLPD controllers. As we also are dealing with a communication link performance, we follow the performance metrics illustrated in Fig. \ref{fig-Metrics} in which the parameters are summarized in Table \ref{para1}.

\begin{figure}[!t]
\centering
      \begin{overpic}[scale=0.35]{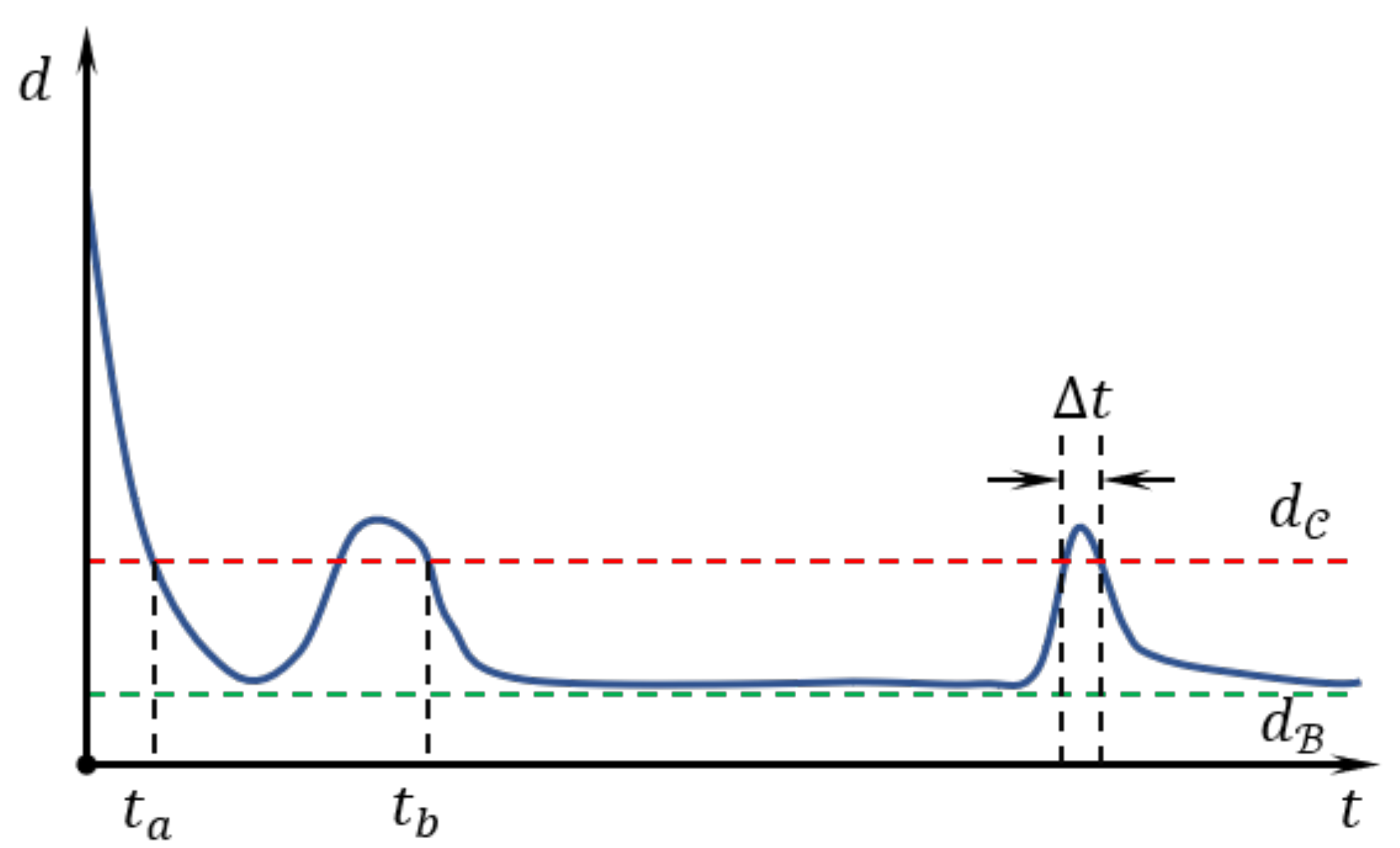}
      \put(45,-1.5){ \scriptsize Time [s]}
        \put(1,18){\scriptsize \begin{rotate}{90} Distance~[{\si m}] \end{rotate}}
            \end{overpic}  
              \caption{ Example that illustrates the stability of the closed-loop
control system and the optical communication performance metrics.}\label{fig-Metrics} 
\end{figure}

\begin{table}[!t]
\caption{Parameters described in Fig. \ref{fig-Metrics}.}
\begin{center}
\begin{tabular}{| c || c | }
  \hline 
\textbf{~~Symbol~~}  & \textbf{~~~Description~~~}   \\ \hline  
$d$ &  ~~~Distance between AUV and the surface ship~~~\\  
~~$\maxd$~~ & Slant height of the cone\\  
~~~$\mind$~~~ &Minimum possible distance for an AUV \\  
 &working at a certain depth $h$, $\mind = h$ \\  \hline  
\end{tabular}
\end{center}
\label{para1}
\end{table}

{\emph{Cone Arrival Time $t_a$}}: The time when the AUV enters the cone \emph{for the first time} and the distance is less than the slant height of the cone-shaped beam region $d_{\cal C}$ is defined as cone arrival time. Once the AUV enters the connectivity cone, the ship's receiver will observe a continuous spot on its wide field of view position detector identifying the AUV transmitter position. Then, the AUV can establish communication and upload images and other data.

{\emph{Communication Established Time $t_b$}}: Ignoring the effect of external disturbances, the time when AUV enters the cone and \emph{stays inside after} is defined as the communication established time.

{\emph{Communication Restoring Time $\Delta t$}}: Suppose an impulse input disturbance $\tau_w$ \emph{which might lead the AUV to out-of-cone status} is introduced into the system. The time spent from the moment that the AUV gets out of the cone to the moment when AUV returns to the cone and \emph{stays inside after} is defined as restoring time.

{\emph{Root Mean Square Error $\mathrm{RMSE}_\chi$}}:  The $\mathrm{RMSE}_\chi$ values quantify the effect of the closed-loop steady state error performance after communication for a given interval of time with respect to the proposed controllers and permit us to conclude on the performance results.

To test the ability of the two proposed target tracking controllers for underwater based optical communication through numerical simulations, we set the initial position of the transmitter at $\begin{bmatrix} 5 & 5 & 0  \end{bmatrix}^T$. At the same time, the mobile ship receiver is at the origin with a heading angle $\varrho_{\mbox{\tiny RX}}=0^\circ$. Table \ref{sample-table-AUV} gives the parameter of the AUV dynamics, which are adapted from \citep{CGEC:10}.  The parameters $\mu_{pj}$ and $\mu_{vj}$ are tuned manually to improve the system behavior. For more details on selecting the parameters $\mu_{pj}$ and $\mu_{vj}$, the interested reader is referred to \citep{CCCTL:17}. The transmitter-receiver distance is approximately $8.03${\si m} at the beginning and the controller's performance gains $K_p\!=\!300$ and $K_v\!=\!250$ are chosen appropriately to achieve a good compromise between trajectory tracking and optical communication performances.

Let the ship be at $\begin{bmatrix} 0&10&0 \end{bmatrix}$ when the AUV starts to track the ship and we make that the mobile ship receiver follows a circular trajectory described by the following circular motion. 
\begin{equation*}
\begin{bmatrix} x_{\mbox{\tiny RX}}(t)\\ y_{\mbox{\tiny RX}}(t)\\ \varrho_{\mbox{\tiny RX}}(t) \end{bmatrix}=
\begin{bmatrix} 10 \sin (0.1t)\\ 10 \sin (0.1t + \frac{\pi}{2}) \\ \frac{\pi}{2}\sin(0.1t)\end{bmatrix}.
\end{equation*}

\subsection{Target Tracking Performance in Nominal Conditions}\label{sub-sec1}
To force the AUV transmitter to reach the cone-shaped beam region and then stay within it without considering any external disturbances, we use the feedback controllers \eqref{eq-OW-aa} and \eqref{eq-OW-a8} to control the AUV. We make that the surface ship vehicle follows a circular trajectory. Then, the trajectories to time and $x$--$y$--$\varrho$ plane described by the receiver and the transmitter using PD and NLPD controllers without any external disturbances are illustrated in Figs. \ref{fig-Traj1}\textbf{a)} and \ref{fig-Traj1}\textbf{b)}, respectively. The red dashed plot and the blue curve are the simulated AUV trajectories using PD and NLPD controllers, respectively. The green curve is the surface ship trajectories that follow a desired circular path. Both PD and NLPD controllers move the AUV transmitter to the desired trajectory, which guarantees the closed-loop stability. Table \ref{tab:rmse_pd_fd_nominal} shows that the NLPD controller reduces $x$ and $y$ axis states tracking errors by more than $70$\% from the states tracking errors of the PD controller. Subsequently, the NLPD controller generates better convergence error than the PD controller.  

Since the link communication quality is determined by the distance between the surface ship vehicle and the AUV, then the different effect of the angle $\varrho$ using the NLPD controller to the PD controller can be neglected from a practical point of view. To test the two proposed controllers' ability, we plot the distance $d$ to the transmitter-receiver line, the logarithm of the data rates $ B $, and the receiver pointing error $\psi$ as shown in Figs. \ref{fig-nominal}\textbf{a)}, \ref{fig-nominal}\textbf{b)}, and \ref{fig-angle1}, respectively. We note that after around $t_a=2$ {\si s}, the distance of both PD and NLPD controllers is less than $d_{\cal C}$ and stays less this threshold. Meanwhile, the AUV remains in the cone-shaped beam region, and its bit rate is guaranteed to be around $10$ {\si Mbps}. 

\begin{table*}[!t]
\caption{Modeling parameters of the AUV.}
\begin{center}
\begin{tabular}{| c | c |c |}
  \hline
~$m_{11}=100$ {\si kg}~  & ~~~~~$m_{22}=250$ {\si kg}~~~~  & ~~$m_{33}=80$ {\si kg}~~      \\ \hline
~~$d_{11}=\!(70\!+\!100\abs{u})${\si kg/s}~~&~~$d_{22}\!=\!(100\!+\!200\abs{v})${\si kg/s}~~ & ~~$d_{33}\!=\!(50\!+\!100\abs{r})${\si kg.m$^2$/s}~~    \\ \hline
\end{tabular}
\end{center}
\label{sample-table-AUV}
\end{table*}

\begin{table}[!t]
    \centering
        \caption{Performance evaluation between PD and NLPD controllers in nominal conditions.}
    \begin{tabular}{ccccc}
    \toprule			
~Control scheme~	&	$~\mathrm{RMSE}_x$[{\si m}]~	&	~$\mathrm{RMSE}_y$[{\si m}]~	&	~$\mathrm{RMSE}_\varrho$[{\si rad}]~	&	~$t_a$[{\si s}]~	\\	\midrule
PD controller	&	0.4583	&	0.4557	&	0.0230	&	2.015	\\	
NLPD controller	&	0.0596	&	0.1197	&	0.0408	&	2.315	\\	\midrule
Improvement	&	\textbf{87.00\%}	&	\textbf{73.73\%}	& -77.39\%	&	-0.3	\\	\bottomrule
\end{tabular}
    \label{tab:rmse_pd_fd_nominal}
\end{table}

\begin{figure*}[!t]
   \begin{minipage}[c]{0.45\linewidth}  
           \centering
      \begin{overpic}[scale=0.26]{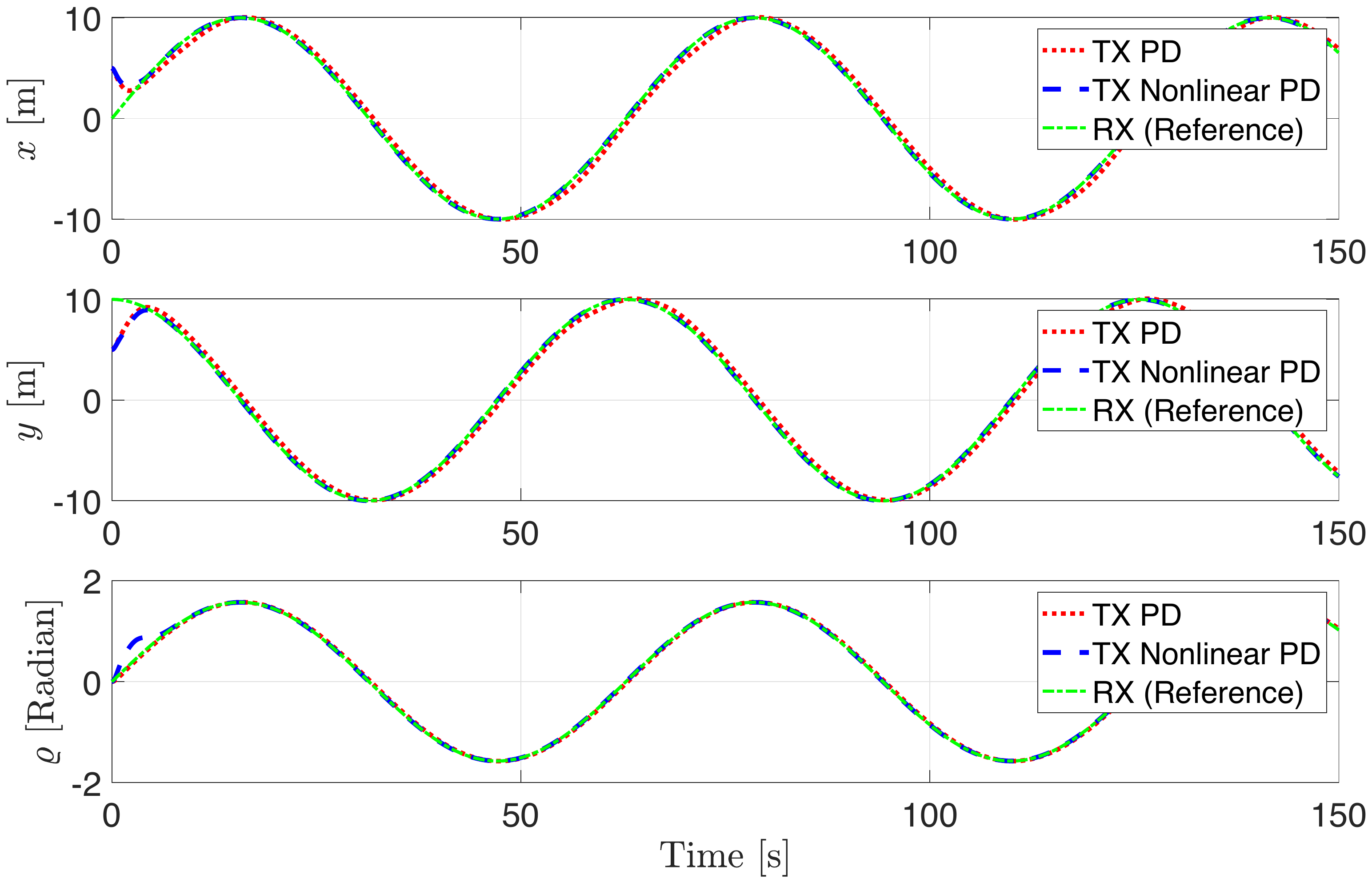}
      \put(20,0){\scriptsize  \textbf{a)}}
            \end{overpic}  
       \end{minipage}\hfill 
         \begin{minipage}[c]{0.48\linewidth}
         \centering
      \begin{overpic}[scale=0.26]{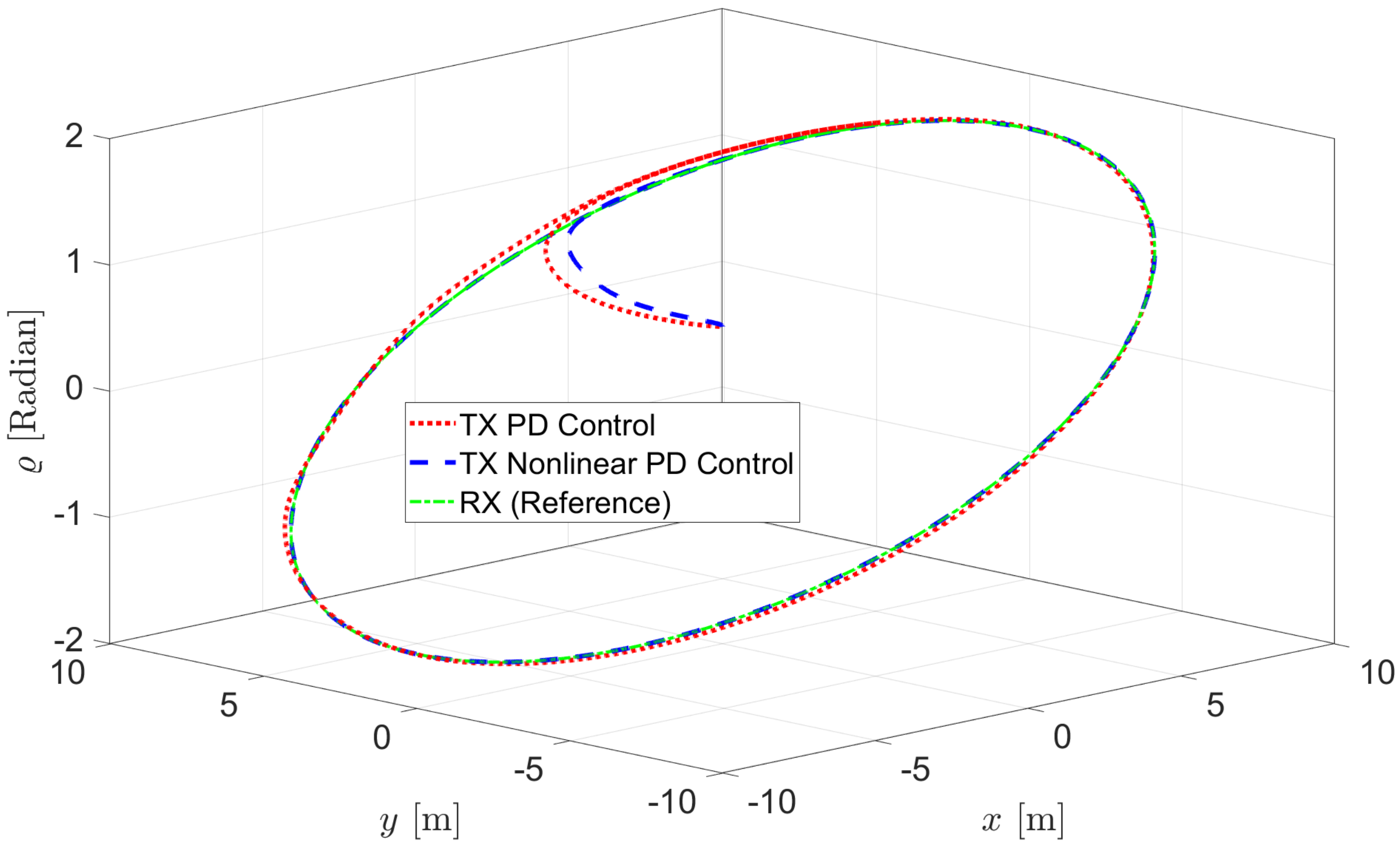}
      \put(20,-1){ \scriptsize \textbf{b)}}
            \end{overpic}  
   \end{minipage}         
      \caption{\textbf{a)} States responses described by the surface ship receiver and the AUV transmitter using PD and NLPD controllers without disturbances; \textbf{b)} 3D-view.} \label{fig-Traj1}
\end{figure*}

\begin{figure*}[!t]
   \begin{minipage}[c]{0.45\linewidth} 
   \centering
      \begin{overpic}[scale=0.25]{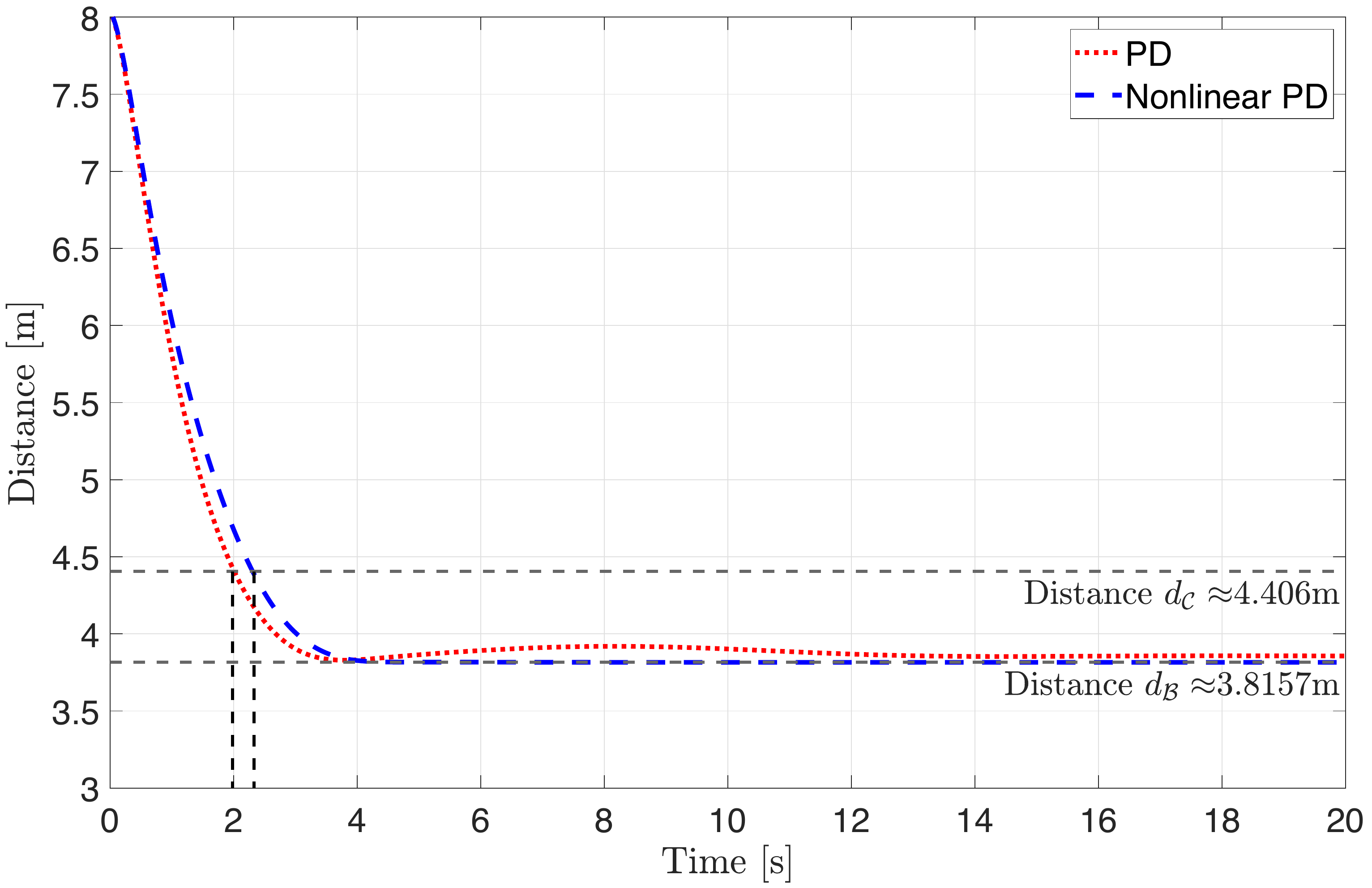}
       \put(20,1){ \scriptsize \textbf{a)}}
            \end{overpic}  
       \end{minipage}\hfill 
         \begin{minipage}[c]{0.48\linewidth}
         \centering
      \begin{overpic}[scale=0.25]{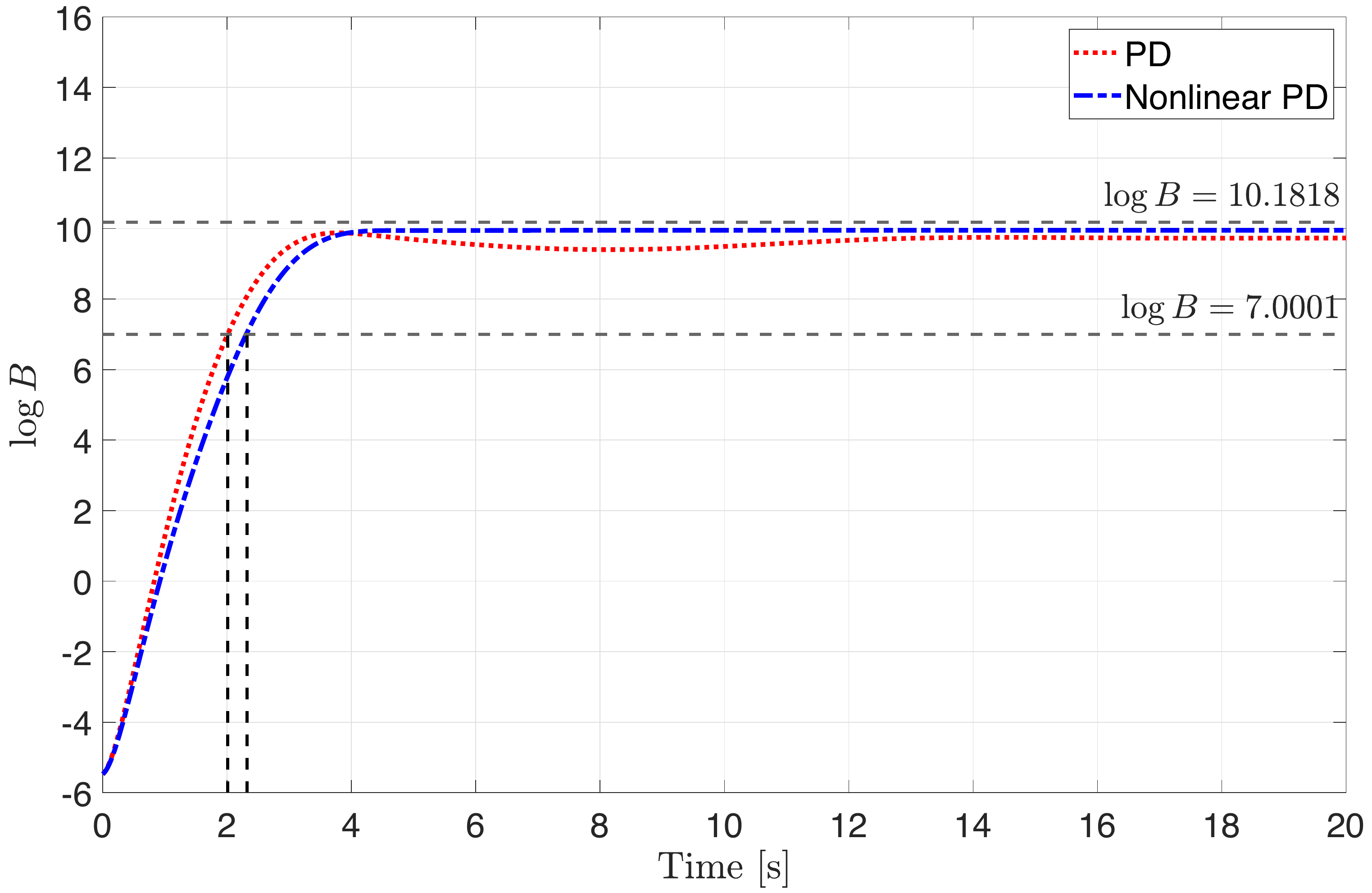}
      \put(20,0){\scriptsize \textbf{b)}}
            \end{overpic}  
   \end{minipage}         
      \caption{NLPD controller versus PD controller: \textbf{a)} Distance $d$ between transmitter-receiver channel; \textbf{b)} Logarithm of the bit rate.} \label{fig-nominal}
\end{figure*}

\begin{figure*}[!t]
\centering
         \begin{minipage}[c]{0.48\linewidth}
      \begin{overpic}[scale=0.25]{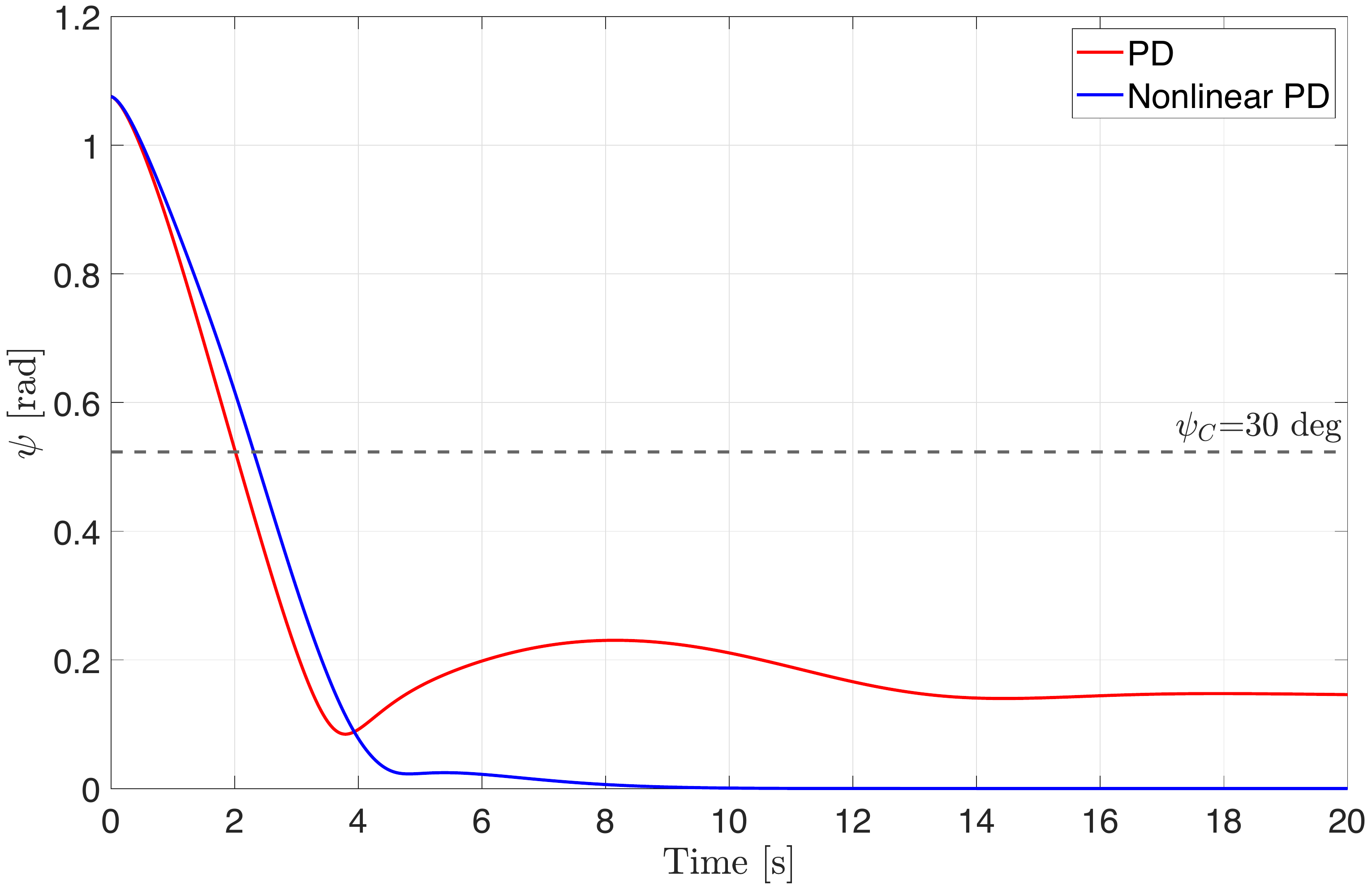}
      \put(20,0){\scriptsize \textbf{c)}}
            \end{overpic}  
   \end{minipage}         
      \caption{NLPD controller versus PD controller: Receiver pointing error $\psi$.} \label{fig-angle1}
\end{figure*}

\subsection{Robust Test}\label{sub-sec2}
The inherent robustness to the external disturbances and measurement noises is an essential factor for marine control systems. Therefore, we investigate the robustness of the tracking control in underwater based optical communication. To show the PD and NLPD controllers' robustness for maintaining a perfect position between transmitter and receiver to establish a directed optical LoS link, we simulate the AUV tracking control under two strict test conditions.  

\subsubsection{Case I: Robustness toward ocean current forces and disturbances}\label{case1}
In the first case, current ocean forces of magnitude $\begin{bmatrix} 350~({\si N}) & 350~({\si N})& 350~({\si N.m}) \end{bmatrix}^T$ and measurement noises are introduced as disturbances. The current ocean disturbances occur at $30$ {\si s} and last for $1$ {\si s}.

\begin{figure*}[!t]
      \vspace{0.5cm}
   \begin{minipage}[c]{0.45\linewidth}  
           \centering
      \begin{overpic}[scale=0.26]{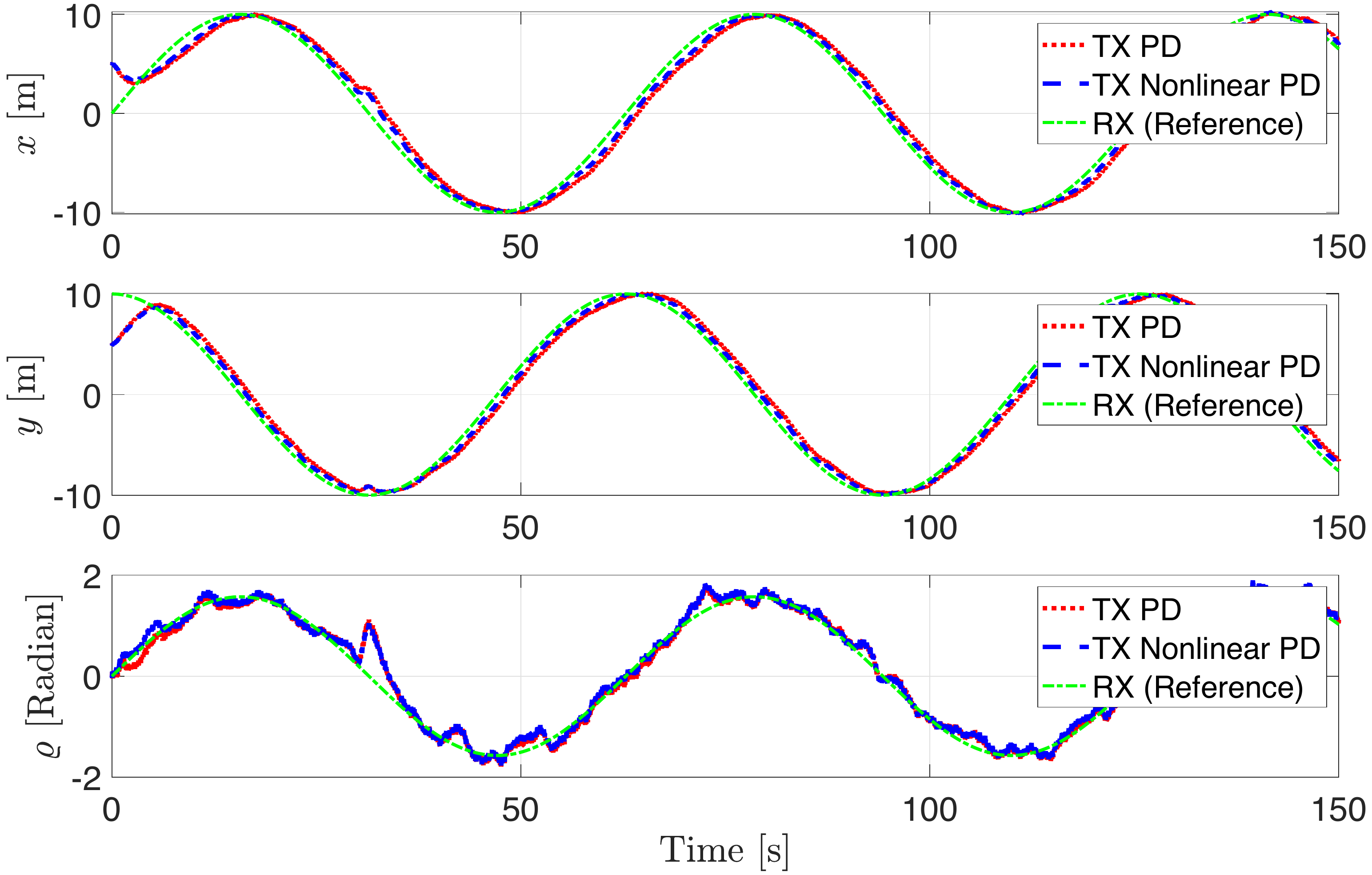}
      \put(20,0){\scriptsize  \textbf{a)}}
            \end{overpic}  
       \end{minipage}\hfill 
         \begin{minipage}[c]{0.48\linewidth}
         \centering
      \begin{overpic}[scale=0.26]{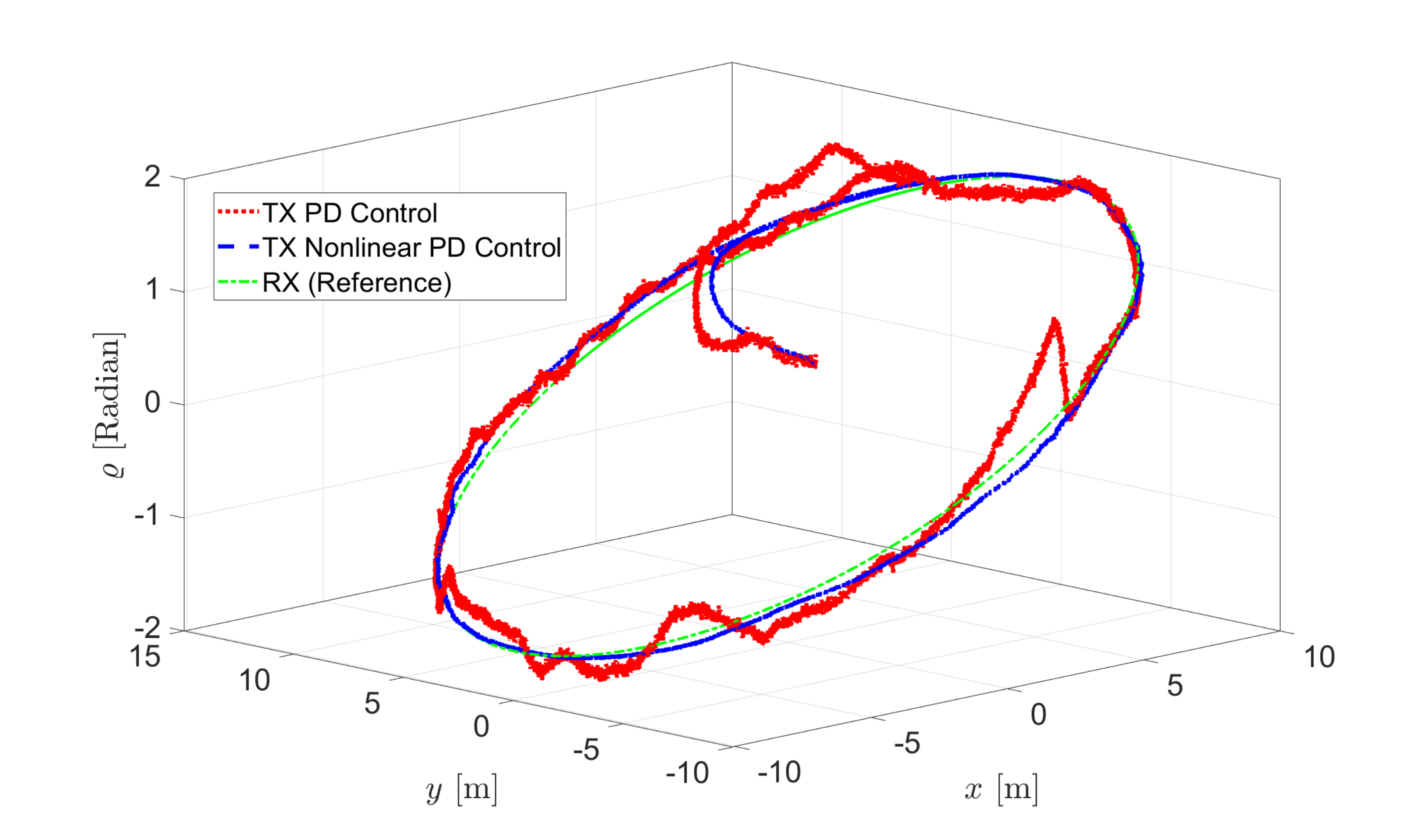}
      \put(20,-1){ \scriptsize \textbf{b)}}
            \end{overpic}  
   \end{minipage}         
      \caption{\textbf{a)} States responses described by the AUV transmitter and the surface ship receiver using PD and NLPD controllers with disturbances---Case I; \textbf{b)} 3D-view---Case I.} \label{fig-Rob1}
\end{figure*}

\begin{figure*}[!t]
   \begin{minipage}[c]{0.45\linewidth} 
   \centering
      \begin{overpic}[scale=0.25]{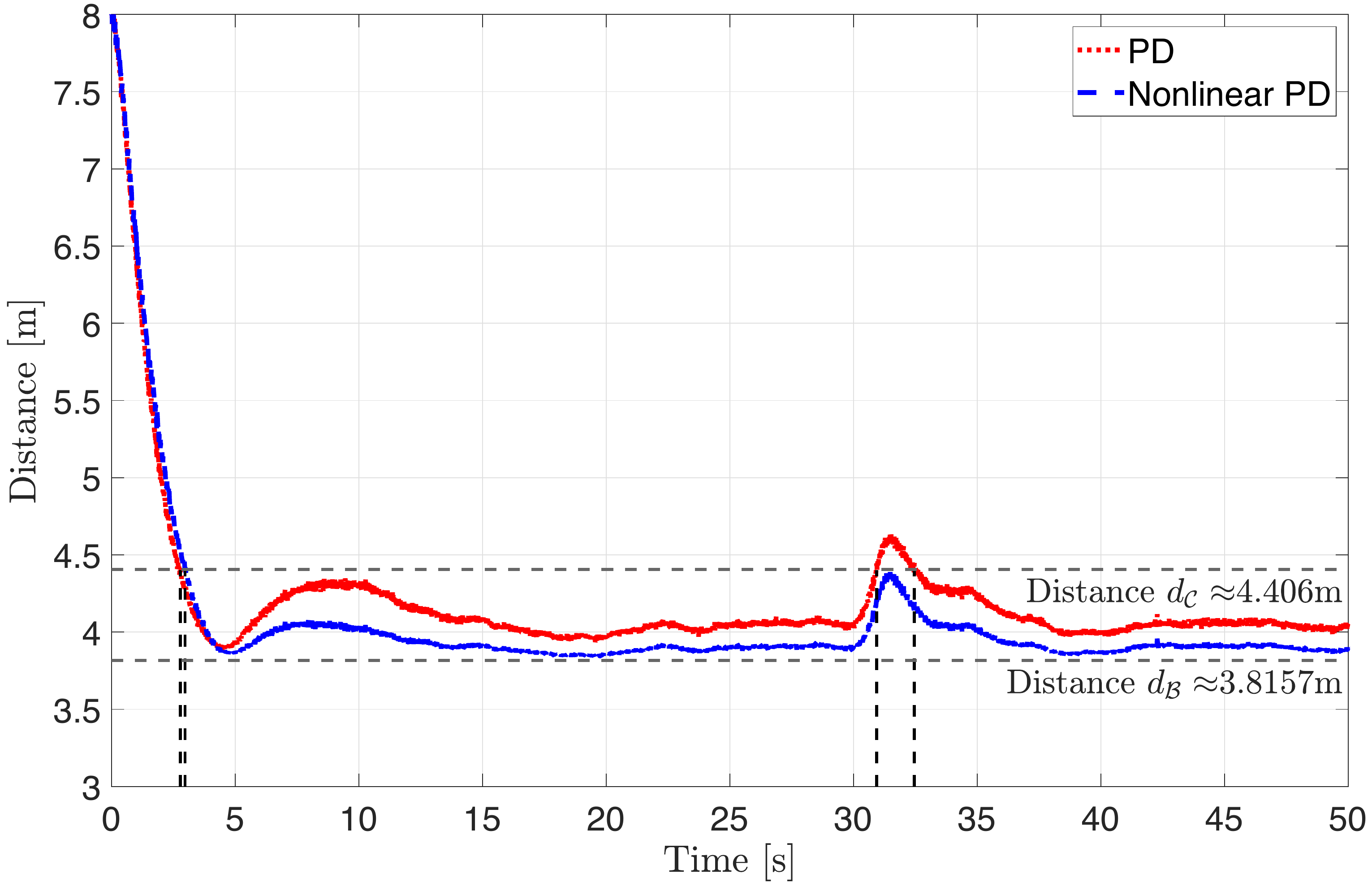}
               \put(64,26){\scriptsize $\downarrow$}
        \put(42,29){ \tiny  Data transmission loss with PD controller}
       \put(20,1){ \scriptsize \textbf{a)}}
            \end{overpic}  
       \end{minipage}\hfill 
         \begin{minipage}[c]{0.48\linewidth}
         \centering
      \begin{overpic}[scale=0.25]{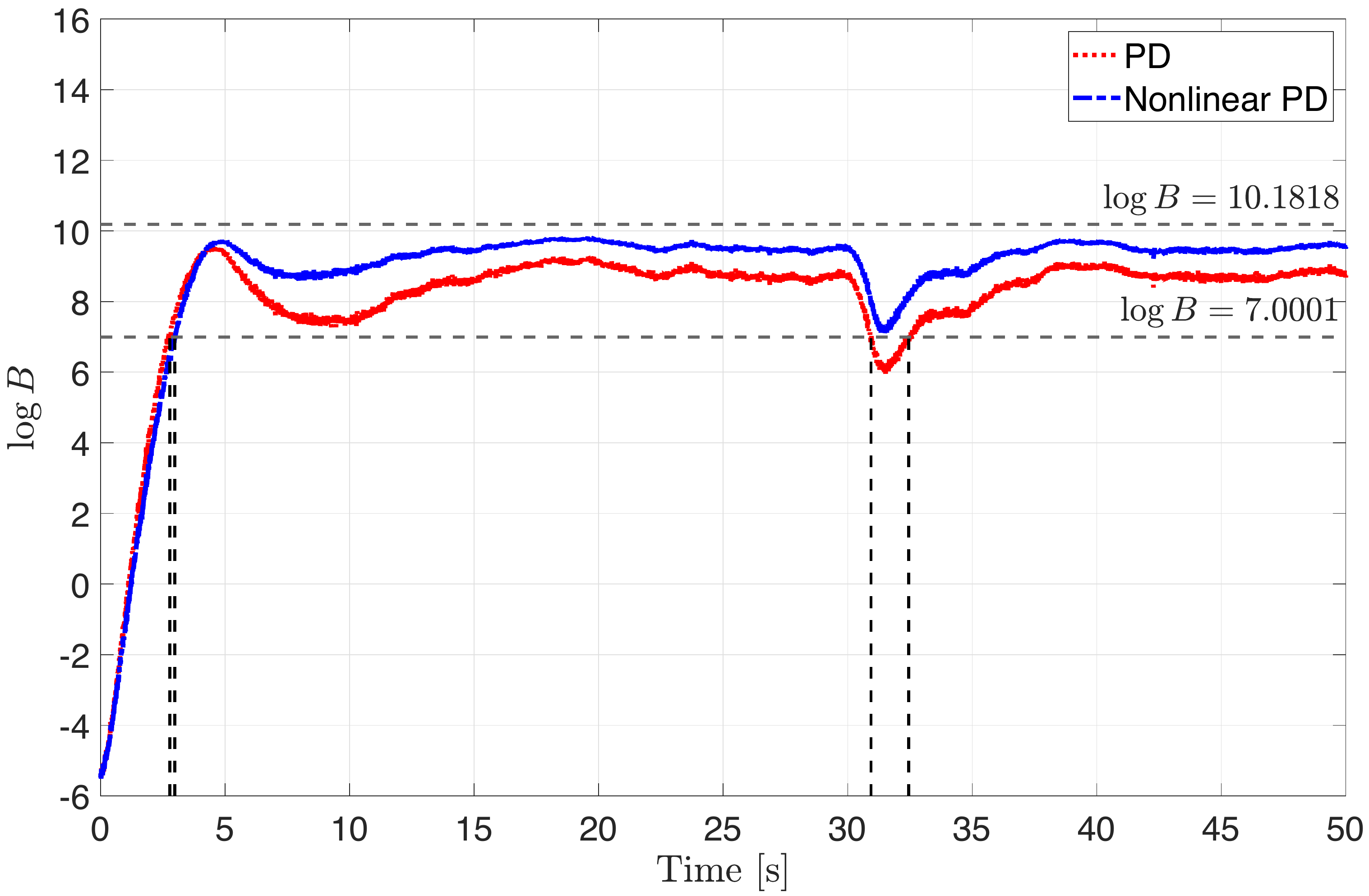}
                                      \put(64,36){\scriptsize $\uparrow$}
        \put(42,33){ \tiny  Data transmission loss with PD controller}
      \put(20,0){\scriptsize \textbf{b)}}
            \end{overpic}  
   \end{minipage}         
      \caption{NLPD controller versus PD controller with measurement noises and current ocean forces---Case I: \textbf{a)} Distance $d$ between transmitter and receiver channel; \textbf{b)} Logarithm of the bit rate.} \label{fig-robust}
\end{figure*}

\begin{figure*}[!t]
\centering
         \begin{minipage}[c]{0.48\linewidth}
      \begin{overpic}[scale=0.25]{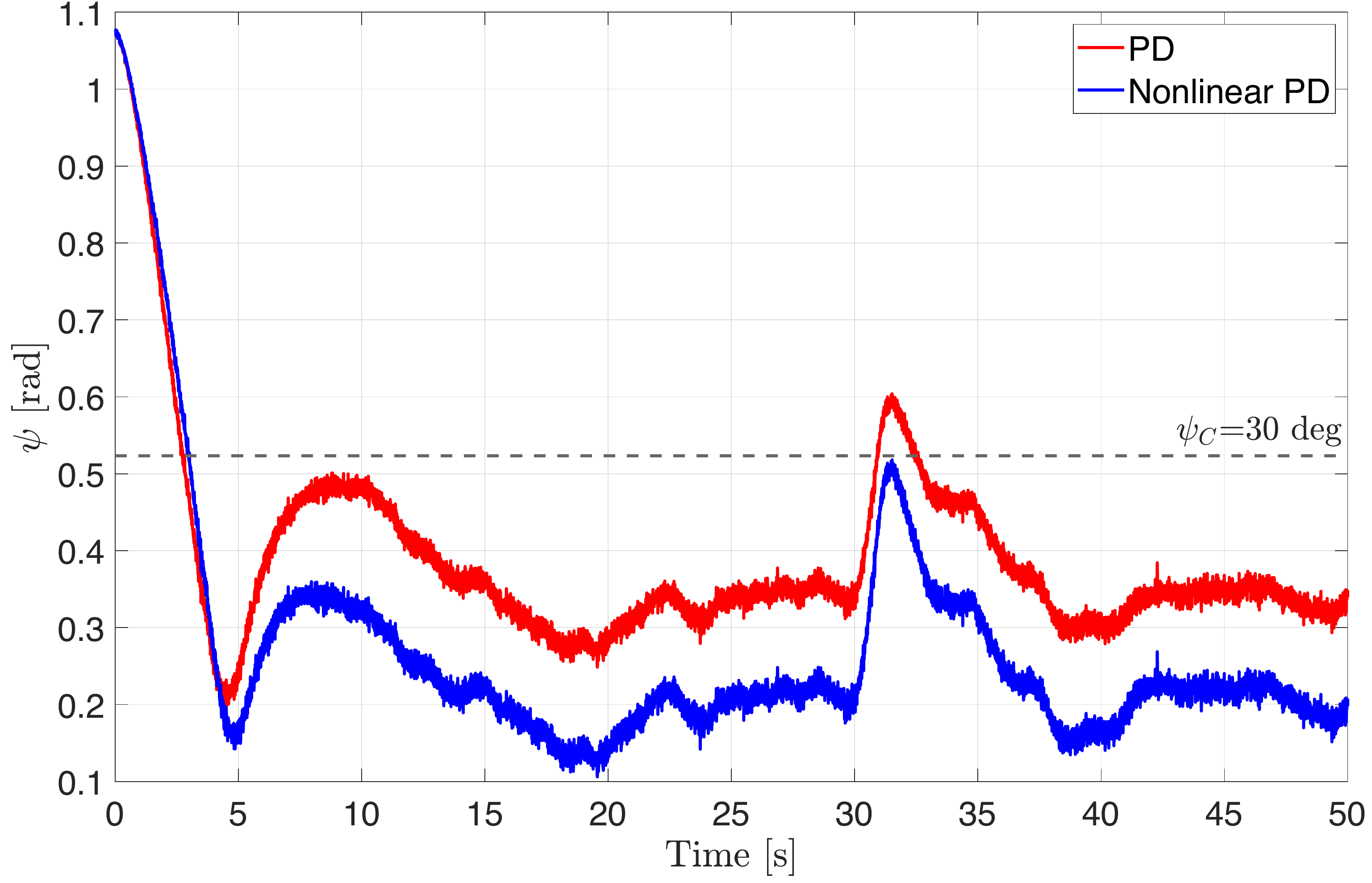}
                                \put(64,36){\scriptsize $\downarrow$}
        \put(42,39){ \tiny  Data transmission loss with PD controller}
            \end{overpic}  
   \end{minipage}         
      \caption{NLPD controller versus PD controller with measurement noises and current ocean forces---Case I: Receiver pointing error $\psi$.} \label{fig-angle_disturbance}
\end{figure*}

\begin{table}[!t]
    \caption{Performance evaluation between PD and NLPD controllers with current ocean forces and measurement noises---Case I.}\label{tab:rmse_pd_fd_ed_nm}
    \centering
  \begin{tabular}{cccccc}
\toprule										
Control scheme	&	$\mathrm{ RMSE}_x$[{\si m}]	&	$\mathrm{RMSE}_y$[{\si m}]	&	$\mathrm{RMSE}_\varrho$[{\si rad}]	&	~$t_a$[{\si s}]~	&	~$\Delta t$[s]~	\\	\midrule
PD controller	&	1.103	&	0.998	&	0.187	&	2.780	&	1.08	\\	
NLPD controller	&	0.687	&	0.597	&	0.185	&	2.975	&	--	\\	\midrule
Improvement	&	\textbf{37.72\%}	&	\textbf{40.18\%}	&	-1.07\%	&	-0.195	&	-	\\	\bottomrule
\end{tabular}
\end{table}

From the simulation results illustrated in Figs. \ref{fig-Rob1}\textbf{a)} and \ref{fig-Rob1}\textbf{b)} with disturbances, we find that the NLPD based tracking control still leads the AUV well converged to the desired trajectory, while the tracking control with PD controller exhibits large tracking errors. The RMSE for both PD and NLPD controllers are summarized in Table \ref{tab:rmse_pd_fd_ed_nm}. Roughly, the NLPD controller reduces $x$ and $y$ axis states tracking errors by more than $35$\% from the states tracking errors of the PD controller. Subsequently, the NLPD controller generates better robustness performance errors and has significant improvement in $\mathrm{RMSE}$ in $x$ and $y$ direction while the PD controller lacks such an advantage.

To test the proposed controllers' performance with measurement noises and current ocean forces, we plot the distance $d$ to transmitter-receiver line, the logarithm of the data rates $ B $ and the receiver pointing error $\psi$ as shown in Figs. \ref{fig-robust}\textbf{a)}, \ref{fig-robust}\textbf{b)}, and \ref{fig-angle_disturbance}, respectively. We note that after around $t_a=2.5${\si s} and $t_a=3.015${\si s} the distance of the PD and NLPD controllers is less than $d_{\cal C}$ and settles less than this value all along the simulation. Then, the AUV stays within the cone-shaped beam region from this time on and the bit rate is guarantee to be around $10$ {\si Mbps} as shown in Fig. \ref{fig-robust}\textbf{b)}. Subsequently, when a short external input disturbance is applied to the system during communication link, NLPD controller has a much smaller $\Delta t$ and overshoot while PD controller lacks to maintain the communication link.

\subsubsection{Case II: Robustness toward parameter's uncertainties, current ocean forces, and disturbances}\label{case2}
To assess the robustness of the two proposed control schemes against model variation, we increase mass parameter by $20\%$, ocean current disturbances of magnitude $\begin{bmatrix} 350~({\si N}) & 350~({\si N})& 350~({\si N.m}) \end{bmatrix}^T$ and measurement noises in the second case.

\begin{figure*}[!t]
   \begin{minipage}[c]{0.45\linewidth}  
           \centering
      \begin{overpic}[scale=0.26]{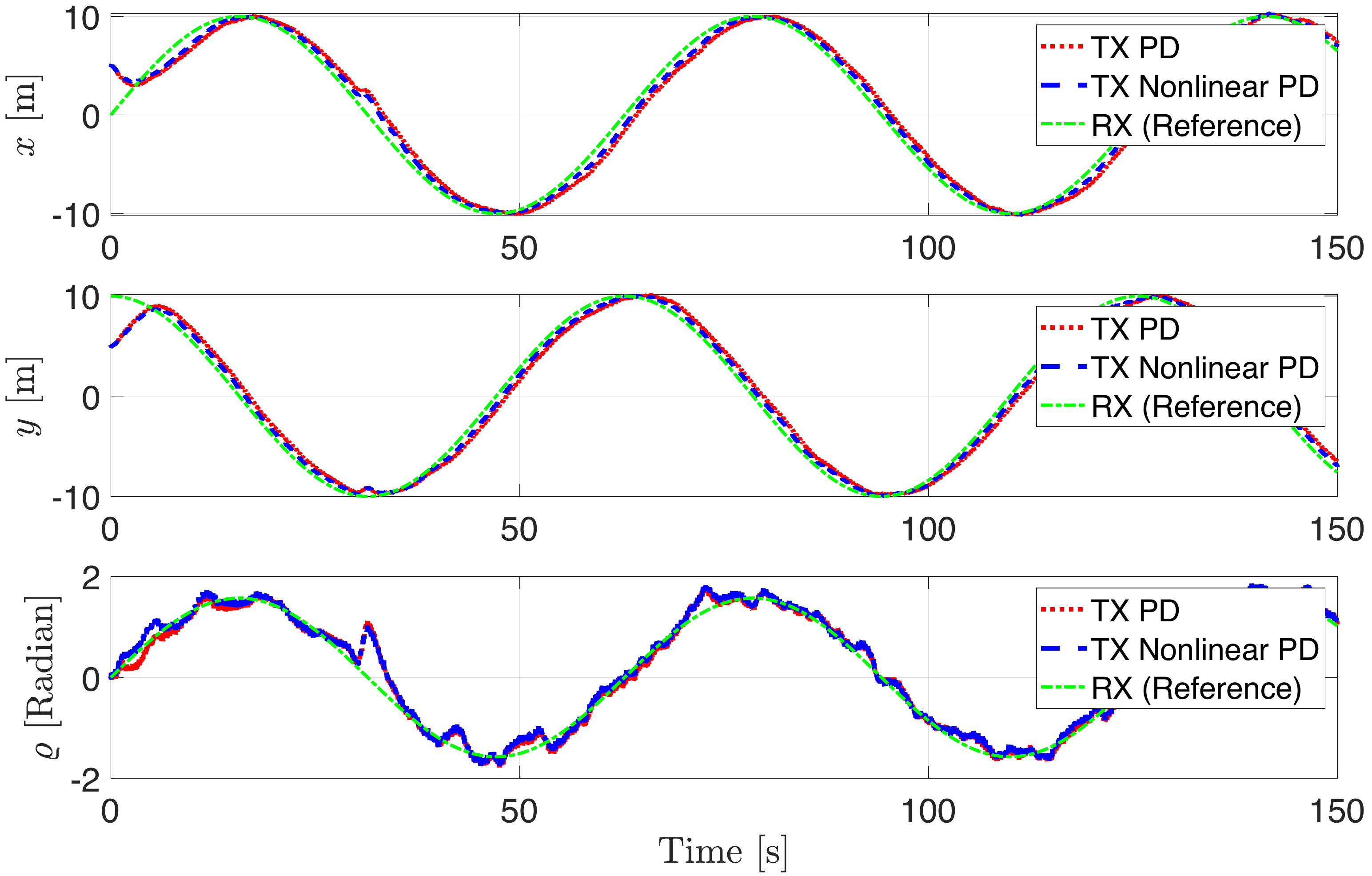}
      \put(20,0){\scriptsize  \textbf{a)}}
            \end{overpic}  
       \end{minipage}\hfill 
         \begin{minipage}[c]{0.48\linewidth}
         \centering
      \begin{overpic}[scale=0.26]{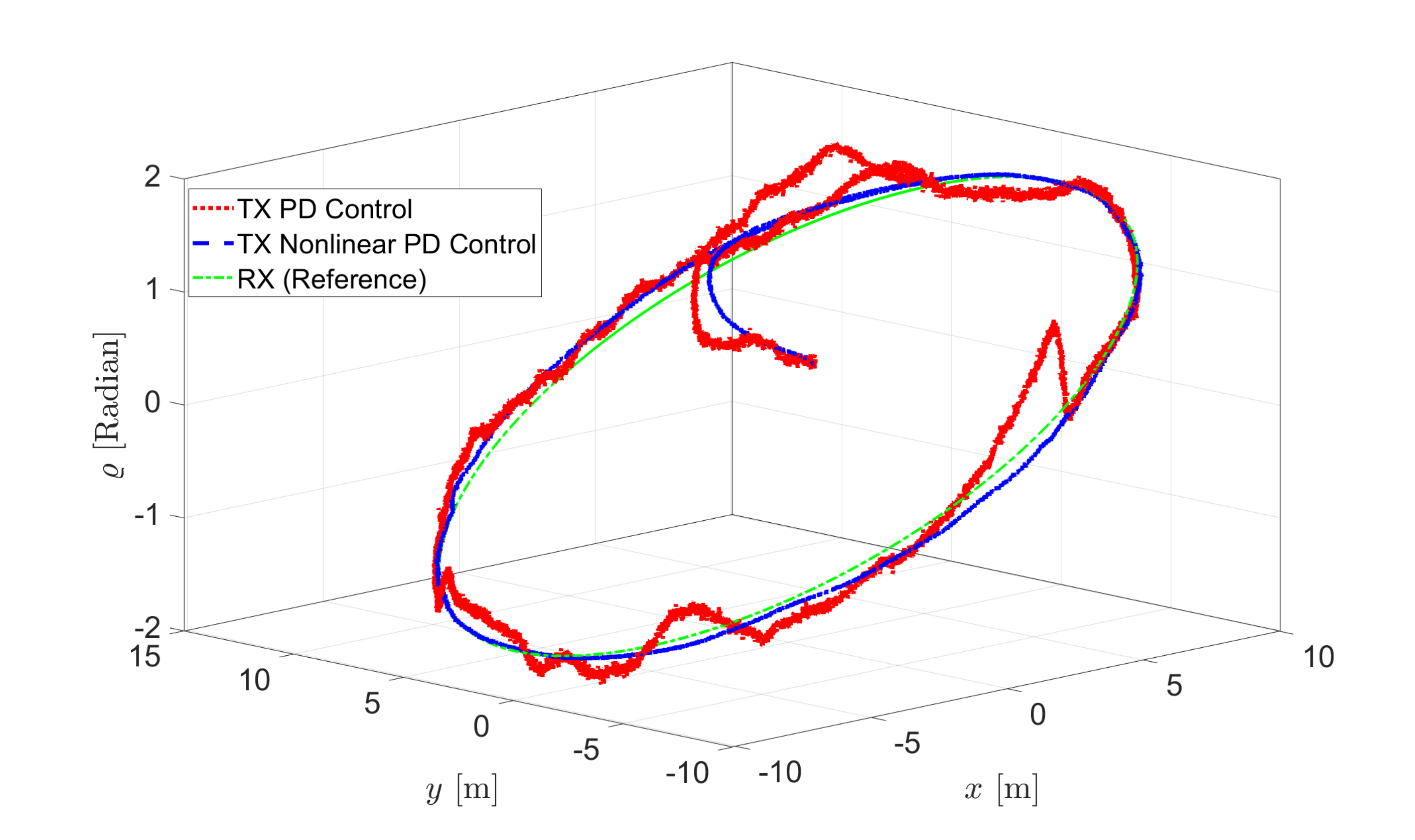}
      \put(20,-1){ \scriptsize \textbf{b)}}
            \end{overpic}  
   \end{minipage}         
      \caption{\textbf{a)} States responses described by the AUV transmitter and the surface ship receiver using PD and NLPD controllers with disturbances---Case II; \textbf{b)} 3D-view---Case II.} \label{fig-Rob2}
\end{figure*}

\begin{figure*}[!t]
   \begin{minipage}[c]{0.45\linewidth} 
   \centering
      \begin{overpic}[scale=0.24]{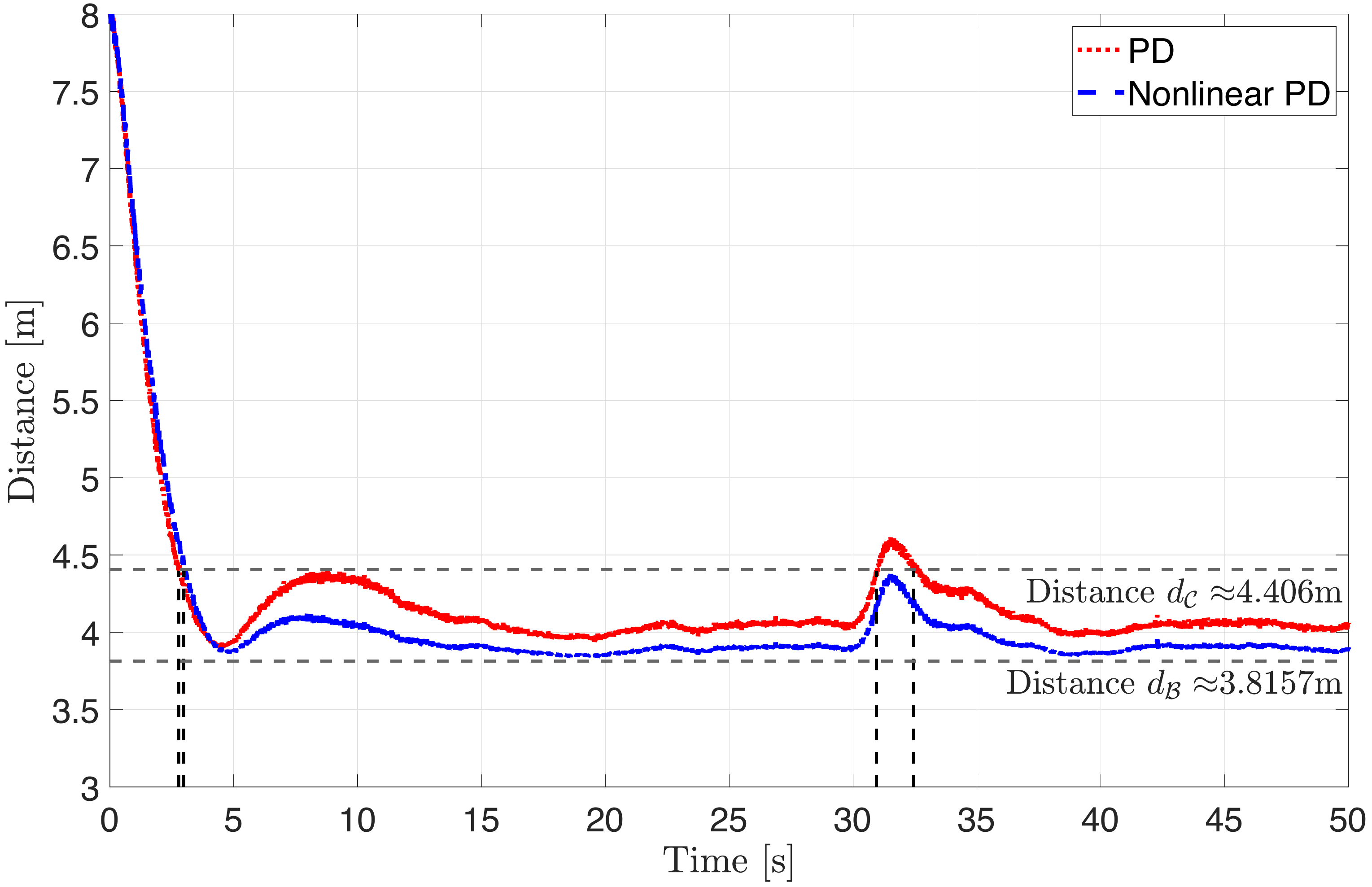}
                                \put(64,26){\scriptsize $\downarrow$}
        \put(42,29){ \tiny  Data transmission loss with PD controller}
        \put(22.5,27.5){\scriptsize \Bigg\downarrow}
        \put(11.5,35){ \tiny \textcolor{black}{\textbf{Very close to the distance limit between transmitter-receiver}}}
       \put(20,1){ \scriptsize \textbf{a)}}
            \end{overpic}  
       \end{minipage}\hfill 
         \begin{minipage}[c]{0.48\linewidth}
         \centering
      \begin{overpic}[scale=0.24]{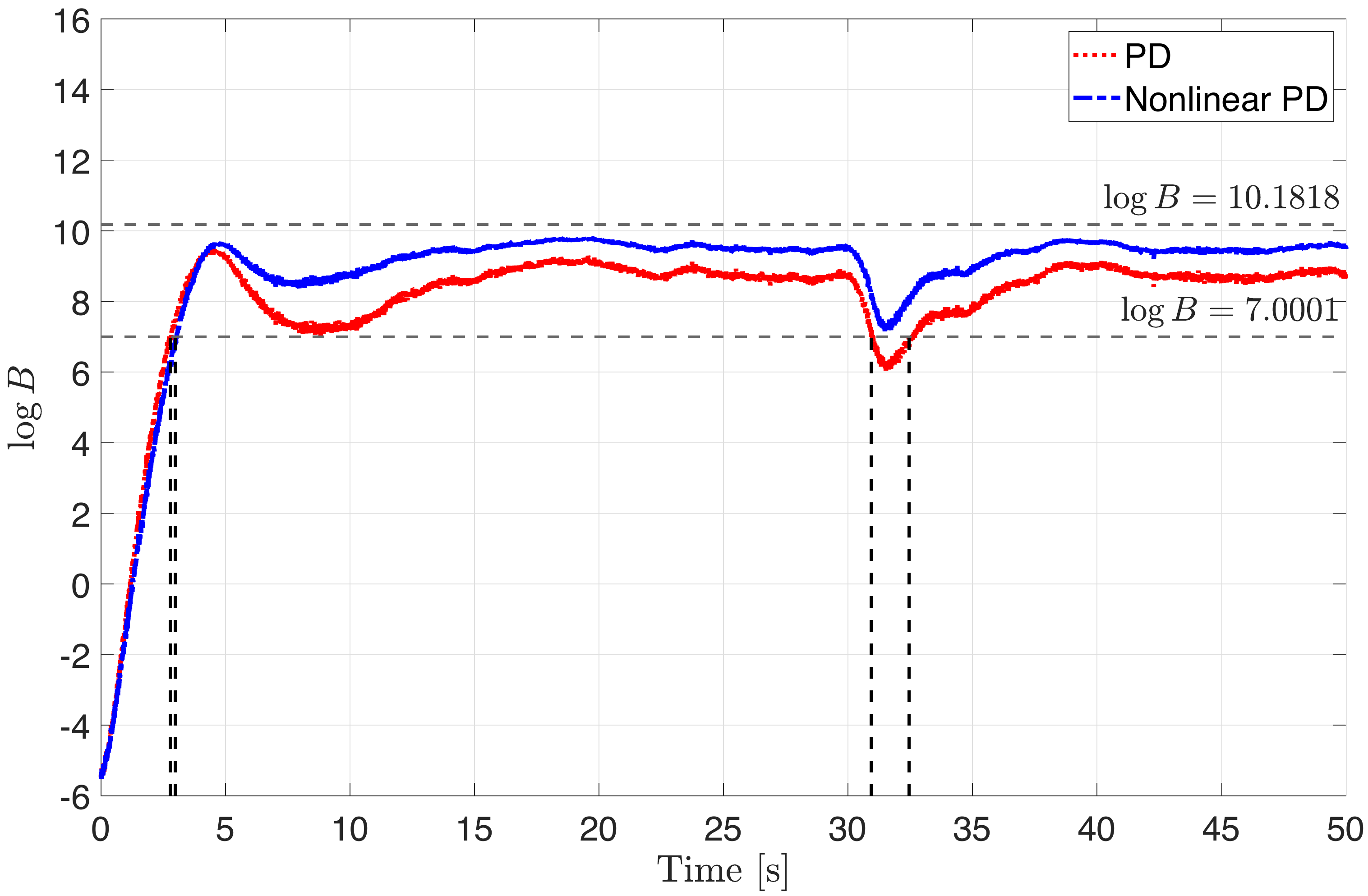}
                                \put(64,36){\scriptsize $\uparrow$}
        \put(42,33){ \tiny  Data transmission loss with PD controller}
                    \put(22,35){\scriptsize \Bigg\uparrow}
        \put(12,29){ \tiny \textcolor{black}{\textbf{Threshold limit of the bit rate}}}
      \put(20,0){\scriptsize \textbf{b)}}
            \end{overpic}  
   \end{minipage}         
      \caption{NLPD controller versus PD controller with with $20\%$ of mass parameter errors, measurement noises, and current ocean  forces---Case II: \textbf{a)} Distance $d$ between transmitter-receiver channel; \textbf{b)} Logarithm of the bit rate.} \label{fig-robust2}
\end{figure*}

\begin{figure*}[!t]
\centering
         \begin{minipage}[c]{0.48\linewidth}
      \begin{overpic}[scale=0.24]{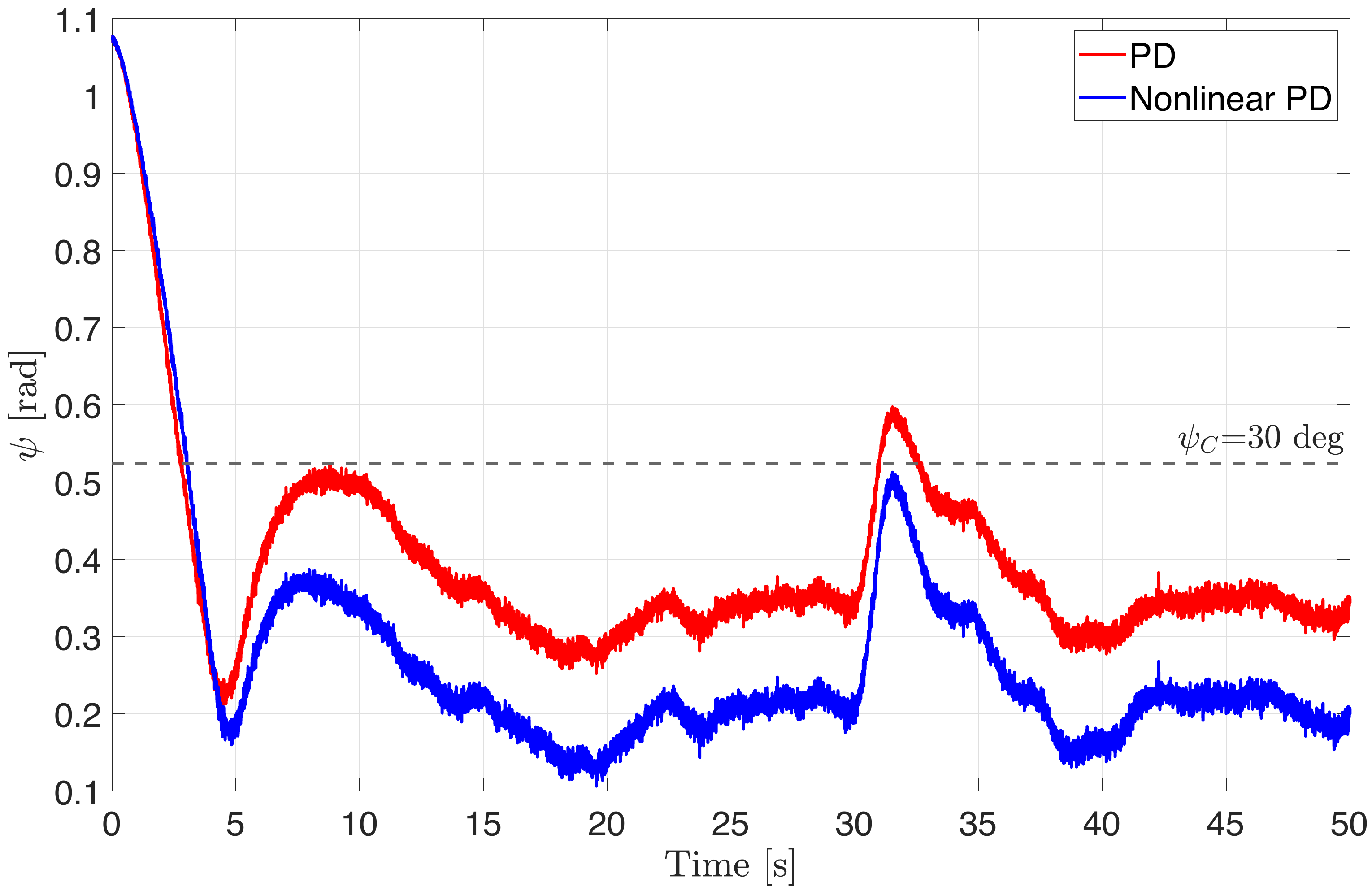}
                          \put(64,36){\scriptsize $\downarrow$}
        \put(42,39){ \tiny  Data transmission loss with PD controller}
              \put(22.75,35.5){\scriptsize \Bigg\downarrow}
        \put(12,42){ \tiny \textcolor{black}{\textbf{Near the limit of the receiver pointing error angle of communication}}}
            \end{overpic}  
   \end{minipage}         
      \caption{NLPD controller versus PD controller with $20\%$ of mass parameter errors, measurement noises, and current ocean forces---Case II: Receiver pointing error $\psi$.} \label{fig-angle_disturbance2}
\end{figure*}

Figs. \ref{fig-Rob2}\textbf{a)} and \ref{fig-Rob2}\textbf{b)} illustrate the performance of the two proposed controllers with disturbances, we observe that the responses of the NLPD based tracking control still lead the AUV well converged to the desired trajectory and guaranteed connectivity. In contrast, the tracking control with the PD controller exhibits significant tracking errors close to the optical communication link's threshold limit.

Further comparative details with the RMSE for both PD and NLPD controllers are summarized in Table \ref{tab:rmse_para}. Roughly, the NLPD controller reduces $x$ and $y$ axis states tracking errors by more than $35$\% from the states tracking errors of the PD controller. Subsequently, the NLPD controller generates better robustness performance errors and has significant improvement in $\mathrm{RMSE}$ in $x$ and $y$ direction while the PD controller lacks such an advantage.

Figs. \ref{fig-robust2}\textbf{a)}, \ref{fig-robust2}\textbf{b)}, and \ref{fig-angle_disturbance2} show the distance $d$ with respect to the transmitter-receiver line, the logarithm of the data rates $B$ and the receiver pointing error $\psi$, respectively. We observe that after around $t_a=2.5$ {\si s} and $t_a=3.015$ {\si s} the distance of the PD and NLPD controllers is less than $d_{\cal C}$ and hovers less this threshold all along the simulation. Hence, the AUV settles within the cone-shaped beam region at the same time and the bit rate is guarantee to be around $10$ {\si Mbps} as shown in Fig. \ref{fig-robust2}\textbf{b)}. Subsequently, when a short ocean current force is applied to the system during communication link, NLPD controller has a much smaller $\Delta t$ and overshoot while PD controller fails to maintain the optical communication link. 

Finally, our objective of providing and maintaining an optical communication link and a positioning tracking performance between the surface ship receiver and the AUV transmitter with an average bit rate of $10$ {\si Mbps} is well accomplished using both PD and NLPD controllers in the two test conditions. In addition, the NLPD controller outperforms the PD controller in terms of disturbance rejection.

\begin{remark}
In practice, the AUV parametric unknowns, including mass parameters, cannot be accurately modeled due to complex marine operations and disturbances. The idea behind the illustration of the differences for both Figs. \ref{fig-Rob1}--\ref{fig-angle_disturbance}, and Figs. \ref{fig-Rob2}--\ref{fig-angle_disturbance2} is to demonstrate the proposed NLPD controller scheme's ability to assess its substantial  robustness property to parametric uncertainties.
\end{remark}

\begin{table}[!t]
    \caption{Performance evaluation between PD and NLPD controllers with $20\%$ of mass parameter errors, ocean current disturbances and measurement noises---Case II.}\label{tab:rmse_para}
    \centering
  \begin{tabular}{cccccc}
\toprule										
Control scheme	&	$\mathrm{ RMSE}_x$[{\si m}]	&	$\mathrm{RMSE}_y$[{\si m}]	&	$\mathrm{RMSE}_\varrho$[{\si rad}]	&	~$t_a$[{\si s}]~	&	~$\Delta t$[s]~	\\	\midrule
PD controller	&	1.114	&	1.007	&	0.189	&	2.765	&	1.08	\\	
NLPD controller	&	0.696	&	0.601	&	0.189	&	3.040	&	--	\\	\midrule
Improvement	&	\textbf{37.52\%}	&	\textbf{40.32\%}	&	0\%	&	-0.0995	&	-	\\	\bottomrule
\end{tabular}
\end{table}

\section{Conclusion}\label{conclusion}
This paper proposes a solution to follow a mobile ship receiver by an AUV transmitter to guarantee a directed optical LoS link. The model of the LoS optical link between these two systems has been analyzed. Position tracking control strategies have been designed for the AUV to maintain a good position to the ship mobile receiver systems while satisfying a desired bit error rate.
Lyapunov function-based analysis that ensures the asymptotic stability of the resulting closed-loop tracking error is used to design the proposed NLPD controller. Through the simulation results in MATLAB/Simulink, the effectiveness of the proposed controllers to achieve favorable tracking in the presence of the solar background noise within competitive times is illustrated. The proposed NLPD based controller has been compared with the PD controller, and its performance is better than the PD controller. Additionally, numerical results demonstrate how the proposed NLPD controller improves the tracking error performance more than $70\%$ under nominal conditions and $35\%$ with model uncertainties and disturbances compared to the original PD strategy. It is worth noticing that the simulation results we have obtained provide a ``proof of concept" for short-range communication to access and cover a restricted subsea area through solar background noise.

Future work will focus on implementing the proposed controller's algorithms in scenarios-based real-time experimental and elevating the operational tracking and communication link budget performances in practices.

\bibliography{MZ-1}
\end{document}